\newtheorem{theorem}{ \noindent T{\footnotesize HEOREM}}[section]
\newtheorem{lemma}{ \noindent L{\footnotesize EMMA}}[section]
\def\bm {\boldsymbol}
\def\bms{{\bf \Sigma}}
\def\S{{\bf S}}
\def\bmu{\bm \mu}
\def\X{\bm X}
\def\L{{\bf \Lambda}}
\def\Y{\bm Y}
\def\O{{\bf \Omega}}
\def\cov{{\rm Cov}}
\def\tr{{\rm tr}}
\title{Spatial Sign based Direct Sparse Linear Discriminant Analysis for High Dimensional Data}
\author{%
  \textbf{Dan Zhuang}$^{1}$~~~
    \textbf{Long Feng}$^{2}$\thanks{Corresponding author: Long Feng <flnankai@nankai.edu.cn>}\\
  $^{1}$School of Mathematics and Statistics,
  Fujian Normal University,\\
  Fuzhou, China \\
   $^{2}$School of Statistics and Data Science, KLMDASR, LEBPS, and LPMC, \\
   Nankai University, 
   Tianjin, China\\
   \texttt{zhgdan@fjnu.edu.cn, flnankai@nankai.edu.cn} \\
}
\begin{document}
\maketitle

\begin{abstract}
This paper investigates the robust linear discriminant analysis (LDA) problem with elliptical distributions in high-dimensional data. We propose a robust classification method, named SSLDA, that is intended to withstand heavy-tailed distributions. We demonstrate that SSLDA achieves an optimal convergence rate in terms of both misclassification rate and estimate error. Our theoretical results are further confirmed by extensive numerical experiments on both simulated and real datasets. Compared with current approaches, the SSLDA method offers superior improved finite sample performance and notable robustness against heavy-tailed distributions. The code is available at \href{https://github.com/RobustC/SSLDA}{https://github.com/RobustC/SSLDA}.
\end{abstract}

\section{Introduction}\label{sec: intro}

High-dimensional data is increasingly prevalent in various real-world applications, such as genomic data\citep{schafer2005shrinkage}, investment portfolio data\citep{ledoit2003improved}, and fMRl decoding\citep{shi2009sparse}. 
High-dimensional classification problems have garnered significant interest in recent decades. 
LDA is extensively utilized among numerous classification methods owing to its efficacy in practical applications. 
However, in high-dimensional settings, the classical LDA reveal the following critical limitations.
Firstly, the precision matrix is either non-estimable or exceedingly challenging to estimate because to the singularity and irreversibility of the covariance matrix; Secondly, there is a compounding of errors in the estimation of unknown parameters. \cite{bickel2004some} discovered that the efficacy of classical LDA may resemble that of random guessing in high-dimensional samples, despite the validity of the Gaussian assumption.

These limitations have resulted in the advancement of enhanced LDA techniques founded on certain sparse assumptions to address high-dimensional classification contexts. One typical strategy involves incorporating regularisation into the classification direction vector, such as the $\ell_1$ regularization \citep{witten2009covariance,mai2012direct,clemmensen2011sparse} or the $\ell_2$ regularization
\citep{guo2007regularized}, among others.
Another specific strategy involves assuming that the covariance matrix $\boldsymbol{\Sigma}$ and the mean difference $\boldsymbol{\delta} = \boldsymbol{\mu}_1 - \boldsymbol{\mu}_2$ are sparse, hence facilitating consistent estimation of them. In \cite{bickel2004some}, a naive Bayes rule or the independence principle is presented by substituting $\boldsymbol{\Sigma}$ with the diagonal of the sample covariance matrix.
\cite{shao2011sparse} proposed a sparse LDA method based on thresholding methodology and demonstrated the resultant can theoretically achieve the Bayes error. \cite{tibshirani2002diagnosis} and  \cite{fan2008high} proposed the nearest shrunken centroid estimation and the features annealed independent rule, wherein variable selection is executed through soft and hard thresholding criteria, respectively.

In contrast to the aforementioned approaches that separately estimation of $\boldsymbol{\Sigma}^{-1}$ and $\boldsymbol{\delta}$, some straightforward and efficient classifiers are introduced that directly estimate the product $\boldsymbol{\beta}^*=\boldsymbol{\Sigma}^{-1}\boldsymbol{\delta}$ by assumes the sparsity of the discriminant direction.
 \cite{cai2011direct} introduced the linear programming discriminant rule for sparse linear discriminant analysis based on direct estimation of $\boldsymbol{\beta}^*$ via limited $\ell_1$ minimization. 
The direct sparse discriminant analysis investigated in \cite{mai2012direct} represents another widely utilised sparse LDA approach, which is both computationally efficient and relatively straightforward to comprehend, as it reformulates the high-dimensional LDA into a penalised linear regression framework. In general, direct estimate $\boldsymbol{\beta}^*$ offers a considerable computational advantage over existing approaches that necessitate separate estimations of them; it is more economical and has strong performance, even when covariance or mean differences are not consistently assessed.

Note that LDA 
shows high sensitivity to outliers \citep{hastie2009elements}, particularly in high-dimensional datasets. Robust estimation of the mean and covariance is essential for the efficacy of LDA.
Optimized LDA performance occurs when normality and homoscedasticity are met \citep{croux2008classification}.
It is critical to point out that traditional robust mean estimation techniques, such as the coordinate-wise median and geometric median, experience degradation in high-dimensional spaces, as their error bounds are proportional to the dimensionality.
Numerous research investigate robust mean estimate for tainted data in high dimensions, such as Tukey’s median and Iterative Filtering\citep{diakonikolas2019robust,diakonikolas2017being}. Recent works have concentrated on reducing the spectral norm of weighted sample covariance and estimating the mean by a weighted average \citep{cheng2019high,zhu2022robust}. Motivated by this methodology, in \cite{deshmukh2023robust}, the $\ell_1$ norm of an outlier indicator vector is reduced subject to a constraint on the spectral norm of the weighted sample covariance, resulting in an order-optimal robust estimate of the mean. Based on $\ell_1$ norm, 
 \cite{li2019robust} introduces a robust discriminant analysis criterion, which is an upper bound of the theoretical framework of Bhattacharyya optimality.

 However, most existing methods
work well under (sub-) Gaussian assumptions but poorly on heavy-tailed distributions.
 For more general distributions, \cite{fang1990statistical} demonstrated that the Fisher rule remains best for elliptical distributions, which encompasses multivariate normal-$t$ and double exponential distributions. 
\cite{wakaki1994discriminant} explored Fisher's linear discriminant function for a wide class of elliptical symmetric distributions sharing a common covariance matrix.
To improve discriminant analysis method robustness and efficiency,
 \cite{andrews2011model} examined linear and quadratic discriminant classification with a mixture of multivariate-$t$ distributions.
 \cite{bose2015generalized} endeavored to extend linear and quadratic discriminant analysis to elliptically symmetric distributions.
Some studies have investigated the efficacy of classification and theoretical intricacies associated with elliptical distributions (e.g., \cite{cai2011direct,shao2011sparse,yang2023efficient}). Han et al. \cite{han2013coda} successfully generalized these principles to broader distributions utilizing the Gaussian linkage approach. Recently, by allowing each observation to originate from its own elliptically symmetric distribution,  \cite{houdouin2024femda} develops an innovative, robust, and model-free discriminant analysis algorithm.

Within the framework of elliptical distributions, employing a spatial-sign-based methodology has shown significant effectiveness, even in high-dimensional situations \citep{oja2010multivariate,raninen2021linear}.
Under high-dimensional elliptical populations, \cite{li2017spectral} investigates spatial-sign covariance matrix on its asymptotic spectral behaviors. \cite{wang2015high} introduced a nonparametric one-sample test utilizing the multivariate spatial sign transformation for elliptically distributed data. Based on spatial ranks and inner standardization, \cite{feng2016spatial} present spatial-sign-based test methodologies for high-dimensional one-sample localization problems. 
In the context of the two-sample location problem, \cite{feng2016multivariate} proposed a robust multivariate-sign-based procedures and  \cite{huang2023high} advocated for a two-sample inverse norm sign test. 
See \cite{zou2014multivariate}, \cite{feng2017high}, \cite{zhang2022robust}, \cite{feng2024spatial}, and the references therein for more related studies on spatial sign-based approaches.

In this work, we introduce a high-dimensional sparse LDA method designed to directly estimate the 'discriminant direction' under the assumption of elliptical distribution. The main contributions of the paper are summarized as below. 
\vspace{-2mm}
\begin{itemize}
    \item We establish theoretical results for Spatial-sign based linear discriminant analysis (SSLDA) in the sparse scenario.
    \item We show the consistency and rate of convergence results both misclassification rate and estimate error by assuming that the data is elliptically distributed.  
    \item We employ
       the sample spatial median and the spatial sign covariance matrix by directly estimate 'discriminant direction', which we demonstrate to be robust and efficient under fairly general assumptions. 
    \item Empirical studies are employed to evaluate image classification performance using datasets of braeburn apples and white cabbages. Results demonstrate that the proposed approach outperforms existing methods, showing promising performance in accuracy and efficiency.          
    \item Code is available at \href{https://github.com/RobustC/SSLDA}{https://github.com/RobustC/SSLDA}, supporting real and complex-valued data.
\end{itemize}
\vspace{-2mm}

The rest of the paper is organized as follows. A robust classification method, named SSLDA, is present in
Section~\ref{Sec:method}. In
Section~\ref{sec:Teor}, we investigate the asymptotic properties of SSLDA. 
Section~\ref{sec:simu} showcases the results of numerical simulations, while Section~\ref{sec:real} demonstrates a real-world application to image classification. Finally, discussion and limitation are present in Section~\ref{sec:conclu}. 

\emph{Notation:} For a vector ${\bm v}=(v_1, \ldots, v_p)^\prime$, we define 
$\|{\bm v}\|_0 = \sum_{j=1}^p\mathbbm{1}\{v_j\neq0\}, \|{\bm v}\|_1 = \sum_{j=1}^p |v_j|$ and $\|{\bm v}\|_2 = \sqrt{\sum_{j=1}^p v_j^2}$, $\|{\bm v}\|_\infty = \max_{1\leq j\leq p}|v_j|$, where $\mathbbm{1}(\cdot)$ is the indicator function that returns 1 if the condition inside the brackets is true, and 0 otherwise. 
For a matrix ${\bf M} = (a_{ij})_{p\times q }$, we define the matrix $\ell_1$ norm $\|{\bf M}\|_{L_1} = \max_{1 \leq j \leq q} \sum_{i=1}^p |a_{ij}|$, the elementwise $\ell_\infty$ norm 
$\|{\bf M}\|_\infty = \max\{|a_{ij}|\}$. $\lambda_{\min}({\bf M})$ and $\lambda_{\max}({\bf M})$ denote the smallest and largest eigenvalues of ${\bf M}$. 
$a_n \asymp b_n$ signifies $a_n=O(b_n)$ and $b_n=O(a_n)$ for any positive number sequences $\left\{a_n\right\}$ and $\left\{b_n\right\}$.
For any random variable $X \in \mathbb{R}$, we define the sub-Gaussian norm as $\|X\|_{\psi_2}:=\sup _{k \geq 1} k^{-1 / 2}\left(E|X|^k\right)^{1 / k}.$


\section{Robust Classification Method}\label{Sec:method}

In the context of classifying between two $p$-dimensional normal distributions, $\mathrm{N}\left(\boldsymbol{\mu}_1, \boldsymbol{\Sigma}\right)$ (designated as class 1) and $\mathrm{N}\left(\boldsymbol{\mu}_2, \boldsymbol{\Sigma}\right)$ (designated as class 2), both sharing the same covariance matrix $\boldsymbol{\Sigma}$, we consider a random vector $\mathbf{Z}$ that originates from one of these distributions with equal prior probabilities. The task is to determine the class to which $\mathbf{Z}$ belongs. When the parameters $\boldsymbol{\mu}_1$, $\boldsymbol{\mu}_2$, and $\boldsymbol{\Sigma}$ are known, Fisher's linear discriminant rule provides a straightforward solution. This rule is given by:
$$
\psi_{\mathbf{F}}(\mathbf{Z})=\mathbbm{1}\left\{(\mathbf{Z}-\boldsymbol{\mu})^{\prime} \boldsymbol{\Omega} \boldsymbol{\delta} \geq 0\right\},
$$
where $\boldsymbol{\mu} = \frac{\boldsymbol{\mu}_1 + \boldsymbol{\mu}_2}{2}$ is the midpoint between the two mean vectors, $\boldsymbol{\delta} = \boldsymbol{\mu}_1 - \boldsymbol{\mu}_2$ is the difference between the two mean vectors, $\boldsymbol{\Omega} = \boldsymbol{\Sigma}^{-1}$ is the inverse of the covariance matrix.

According to this rule, $\mathbf{Z}$ is classified into class 1 if $\psi_{\mathbf{F}}(\mathbf{Z})=1$, and into class 2 otherwise. Fisher's linear discriminant is the optimal classifier in this scenario, as it coincides with the Bayes rule when the prior probabilities for the two classes are equal.

In practice, we often don't know the true parameters, so we estimate them using samples. Suppose  $\left\{\mathbf{X}_k ; 1 \leq k \leq n_1\right\}$ and $\left\{\mathbf{Y}_k ; 1 \leq k \leq n_2\right\}$ are independent and identically distributed random samples from $\mathrm{N}\left(\boldsymbol{\mu}_1, \boldsymbol{\Sigma}\right)$ and $\mathrm{N}\left(\boldsymbol{\mu}_2, \boldsymbol{\Sigma}\right)$, respectively. Set $\hat{\bmu}_1$ and $\hat{\bmu}_2$ are the sample means of these two samples, respectively. And $\hat{\bms}=\frac{1}{n}\left(n_1\hat{\bms}_1+n_2\hat{\bms}_2\right), n=n_1+n_2$ where $\hat{\bms}_1,\hat{\bms}_2$ are the sample covariance matrix of these two samples, respectively.
So $\mathbf{Z}$ is classified into class 1 if
\begin{align*}
(\mathbf{Z}-\hat\bmu)^{\prime} \hat{\boldsymbol{\Omega}} \hat{\boldsymbol{\delta}} \geq 0
\end{align*}
where $\hat{\bmu}=\frac{\hat{\bmu}_1+\hat{\bmu}_2}{2}, \hat{\bm \delta}=\hat{\bmu}_1-\hat{\bmu}_2$ and $\hat{\O}=\hat{\bms}^{-1}$. When the dimension $p$ is larger than the sample sizes, $\hat{\bms}$ is not invertible. So the above classification rule can not work well. Consequently, many literatures consider the high dimensional linear discriminant analysis, such as \cite{cai2011direct,le2020adapted,park2022high}. In an important work, 
\cite{cai2011direct} proposed a direct estimation approach to sparse linear discriminant analysis. They estimate $\bm \beta^*=\O\bm \delta$ directly by the solution to the following optimization problem:
\begin{align}\label{cl1}
\hat{\boldsymbol{\beta}} \in \underset{\boldsymbol{\beta} \in \mathbb{R}^p}{\arg \min }\left\{\|\boldsymbol{\beta}\|_1 \text { subject to }\left\|\hat{\boldsymbol{\Sigma}} \boldsymbol{\beta}-(\hat{\bmu}_1-\hat{\bmu}_2)\right\|_{\infty} \leq \lambda_n\right\},
\end{align}
where  $\lambda_n$  is a tuning parameter. The constrained $\ell_1$ minimization method (\ref{cl1}) is known to be an effective way for reconstructing sparse signals, see 
\cite{donoho2005stable} and \cite{candes2007dantzig}.
Then, they proposed a new classification rule: $\mathbf{Z}$ is classified into class 1 if
\begin{align*}
(\mathbf{Z}-\hat\bmu)^{\prime} \hat{\bm \beta} \geq 0.
\end{align*}

However, the above methods are all constructed based on the sample means and covariance matrix which do not perform very well for heavy-tailed distributions. In this paper, we assume $\X$ and $\Y$ are generated from the elliptical distribution with density
\begin{align*}
|\L|^{-1 / 2} g\left((x-\bmu_k)^{\prime} \L^{-1}(x-\bmu_k)\right), k=1,2,
\end{align*}
respectively, where $g(\cdot)$ is a decreasing function. Without loss of generality, we assume that $\tr(\L)=p$. If the covariance matrix $\bms=\cov(\X)=\cov(\Y)$ exist, $\bms=\omega \L$ with positive parameter $\omega=p^{-1}\tr(\bms)\in\mathbb{R}$. \cite{fang1990statistical} showed that Fisher’s rule is still optimal for elliptical distributions. Since the constant \(\omega\) does not affect the decision rule, we can, without loss of generality, set \(\omega = 1\). In this case, we use the sample spatial median and spatial-sign covariance matrix to replace the sample mean and covariance matrix.

We often use the spatial median to estimate $\bm \mu$, i.e.
\begin{align}\label{emu}
\tilde{\bm \mu}_1=\underset{\bm \mu \in \mathbb{R}^p}{\arg\min} \sum_{i=1}^{n_1}\|\bm X_i-\bm \mu\|_2, ~\tilde{\bm \mu}_2=\underset{\bm \mu \in \mathbb{R}^p}{\arg\min} \sum_{i=1}^{n_2}\|\bm Y_i-\bm \mu\|_2
\end{align}
Let $U(\bm x)=\frac{\bm x}{\|\bm x\|_2}I(\bm x\not=\bm 0)$. Then the sample spatial sign covariance matrix is defined as
\begin{align}\label{hats}
\hat{{\bf S}}_1=\frac{1}{n_1}\sum_{i=1}^{n_1} U(\bm X_i-\tilde{\bm \mu}_1)U(\bm X_i-\tilde{\bm \mu}_1)^\prime, ~\hat{{\bf S}}_2=\frac{1}{n_2}\sum_{i=1}^{n_2} U(\bm Y_i-\tilde{\bm \mu}_2)U(\bm Y_i-\tilde{\bm \mu}_2)^\prime
\end{align}
and the population spatial-sign covariance matrix is estimated by $\hat{\S}=\frac{1}{n}\left(n_1\hat{\S}_1+n_2\hat{\S}_2\right)$. We estimate $\bm \gamma^*=\L^{-1}\bm \delta$ directly by the solution to the following optimization problem:
\begin{align}\label{cl}
\hat{\boldsymbol{\gamma}} \in \underset{\boldsymbol{\gamma} \in \mathbb{R}^p}{\arg \min }\left\{\|\boldsymbol{\gamma}\|_1 \text { subject to }\left\|p\hat{\S} \boldsymbol{\gamma}-(\tilde{\bmu}_1-\tilde{\bmu}_2)\right\|_{\infty} \leq \lambda_n\right\},
\end{align}
and the corresponding classification rule is
$\mathbf{Z}$ is classified into class 1 if
\begin{align}\label{scl1}
(\mathbf{Z}-\tilde\bmu)^{\prime} \hat{\bm \gamma} \geq 0.
\end{align}
where $\tilde\bmu=\frac{\tilde\bmu_1+\tilde\bmu_2}{2}$.

\section{Theoretical Results} \label{sec:Teor}
The optimal misclassification rate in this case is
$$
R: =\frac{1}{2} \mathrm{P}\left((\X-\bmu_1)^{\prime} \boldsymbol{\Omega} \boldsymbol{\delta}<-\frac{1}{2} \boldsymbol{\delta}^{\prime} \boldsymbol{\Omega} \boldsymbol{\delta}\right)+\frac{1}{2} \mathrm{P}\left((\Y-\bmu_2)^{\prime} \boldsymbol{\Omega} \boldsymbol{\delta} \geq \frac{1}{2} \boldsymbol{\delta}^{\prime} \boldsymbol{\Omega} \boldsymbol{\delta}\right) .
$$
As in the work of 
\cite{shao2011sparse}, under the assumption of elliptical distribution, for any $p$-dimensional non-random vector ${\bm u}$ with $\|\bm u\|_2=1$ and any $t \in \mathbb{R}$,
$$
\mathrm{P}\left({\bm u}^{\prime} \boldsymbol{\Omega}^{1 / 2} (\X-\bmu_1) \leq t\right)=: \Psi(t)
$$
is a continuous distribution function symmetric about 0 and does not depend on ${\bm u}$. Given $\left\{\mathbf{X}_k\right\}$ and $\left\{\mathbf{Y}_k\right\}$, the conditional classification error of the linear programming discriminant (LPD) rule 
 is
$$
R_n:=1-\frac{1}{2} \Psi\left(-\frac{\left(\tilde{\boldsymbol{\mu}}-\boldsymbol{\mu}_1\right)^{\prime} \hat{\boldsymbol{\gamma}}}{\left(\hat{\boldsymbol{\gamma}}^{\prime} \boldsymbol{\Sigma} \hat{\boldsymbol{\gamma}}\right)^{1 / 2}}\right)-\frac{1}{2} \Psi\left(\frac{\left(\tilde{\boldsymbol{\mu}}-\boldsymbol{\mu}_2\right)^{\prime} \hat{\boldsymbol{\gamma}}}{\left(\hat{\boldsymbol{\gamma}}^{\prime} \boldsymbol{\Sigma} \hat{\boldsymbol{\gamma}}\right)^{1 / 2}}\right) .
$$
where $\hat{\boldsymbol{\gamma}}$ is given in (\ref{cl}). The efficacy of the LPD rule can be effectively gauged through the difference (or ratio) between $R_n$ and $R$. Let $\Delta_p=\boldsymbol{\delta}^{\prime} \boldsymbol{\Omega} \boldsymbol{\delta}=\omega^{-1}\bm \delta^\prime \L^{-1}\bm \delta$. $\{\sigma_{ii}=\mathbf{\Sigma}_{ii}\}_{i=1}^p$ denote the corresponding marginal variances. Refer to \cite{cai2011direct}, we need provide the following conditions before presenting the difference and ratio between $R_n$ and $R$. 
\begin{itemize}
\item[(C1)] $n_1 \asymp n_2, \log p \leq n, c_0^{-1}\le \lambda_{\min}(\mathbf{\Sigma})\le \lambda_{\max}(\mathbf{\Sigma})\le c_0, \max\limits _{1 \leq i \leq p} \sigma_{i i} \leq K$ and $\Delta_p \geq c_1$ for some constant $K>0$ and $c_0,c_1>0$.
\item[(C2)] Define $\zeta_k=\mathbb{E}\left(\xi_i^{-k}\right),  \xi_i=\left\|\boldsymbol{X}_i-\boldsymbol{\mu}\right\|_2,  \nu_i=\zeta_1^{-1} \xi_i^{-1}$.
(1) $\zeta_k \zeta_1^{-k}<\zeta \in(0, \infty)$ for $k=1,2,3,4$ and all $p$.
(2) $\lim \sup _p \lambda_{\max}(\mathbf{S})<1-\psi<1$ for some positive constant $\psi$.
(3) $\nu_i$ is sub-gaussian distributed, i.e. $\left\|\nu_i\right\|_{\psi_2} \leq K_\nu<\infty$.
\end{itemize}
Condition (C1) is the same as condition (C1) in \cite{cai2011direct}, which is commonly used conditions in the high dimensional setting. Condition (C2) are consistent with conditions (A1-A2) in \cite{feng2024spatial}, which ensure the consistency of the spatial median estimator (\ref{emu}).

Then, we having the following theoretical results.

\begin{theorem}\label{th1}
Let $\lambda_n=C \sqrt{\Delta_p \log p / n}$ with $C$ being a sufficiently large constant. Suppose (C1)-(C2) hold and
\begin{align}\label{the1}
\frac{\|\boldsymbol{\L^{-1}} \boldsymbol{\delta}\|_1}{\Delta_p^{1 / 2}}+\frac{\|\boldsymbol{\L^{-1}} \boldsymbol{\delta}\|_1^2}{\Delta_p^2}=o\left(\sqrt{\frac{n}{\log p}}\right).
\end{align} 
Then we have as $n \rightarrow \infty$ and $p \rightarrow \infty$,
$$
R_n-R \rightarrow 0
$$
in probability.
\end{theorem}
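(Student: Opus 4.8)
The plan is to adapt the direct-estimation analysis of Cai and Liu \cite{cai2011direct}, replacing the sample covariance and means by the rescaled spatial-sign covariance matrix $p\hat{\S}$ and the spatial medians $\tilde{\boldsymbol{\mu}}_k$, and to control the resulting estimation errors through condition (\ref{the1}). Write $\boldsymbol{\gamma}^*=\L^{-1}\boldsymbol{\delta}$ and recall $\bms=\L$, $\O=\L^{-1}$, so that $\boldsymbol{\delta}^\prime\boldsymbol{\gamma}^*=\boldsymbol{\gamma}^{*\prime}\bms\boldsymbol{\gamma}^*=\Delta_p$ and the oracle ratio is $\boldsymbol{\delta}^\prime\boldsymbol{\gamma}^*/(\boldsymbol{\gamma}^{*\prime}\bms\boldsymbol{\gamma}^*)^{1/2}=\Delta_p^{1/2}$; a direct computation gives $R=\Psi(-\tfrac12\Delta_p^{1/2})$. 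Using the symmetry $\Psi(-t)=1-\Psi(t)$, I would rewrite $R_n=\tfrac12\Psi(-a_1)+\tfrac12\Psi(-a_2)$, where $a_1,a_2$ are the two plug-in arguments. Since $\tilde{\boldsymbol{\mu}}-\boldsymbol{\mu}_k=\mp\tfrac12\boldsymbol{\delta}+\tfrac12[(\tilde{\boldsymbol{\mu}}_1-\boldsymbol{\mu}_1)+(\tilde{\boldsymbol{\mu}}_2-\boldsymbol{\mu}_2)]$ for $k=1,2$, the evaluation points $-a_i$ both equal $-\boldsymbol{\delta}^\prime\hat{\boldsymbol{\gamma}}/\{2(\hat{\boldsymbol{\gamma}}^\prime\bms\hat{\boldsymbol{\gamma}})^{1/2}\}$ up to a common perturbation $\pm r/(\hat{\boldsymbol{\gamma}}^\prime\bms\hat{\boldsymbol{\gamma}})^{1/2}$ with $r=\tfrac12[(\tilde{\boldsymbol{\mu}}_1-\boldsymbol{\mu}_1)+(\tilde{\boldsymbol{\mu}}_2-\boldsymbol{\mu}_2)]^\prime\hat{\boldsymbol{\gamma}}$. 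Hence it suffices to show the plug-in ratio concentrates at $\Delta_p^{1/2}$ and that the normalized median term vanishes.

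Two concentration inputs drive everything. First, under (C2) the spatial-median theory used in \cite{feng2024spatial} gives $\|\tilde{\boldsymbol{\mu}}_k-\boldsymbol{\mu}_k\|_\infty=O_p(\sqrt{\log p/n})$, hence $\|(\tilde{\boldsymbol{\mu}}_1-\tilde{\boldsymbol{\mu}}_2)-\boldsymbol{\delta}\|_\infty=O_p(\sqrt{\log p/n})$ and $|r|=O_p(\|\hat{\boldsymbol{\gamma}}\|_1\sqrt{\log p/n})$. Second, and crucially, I would establish the Dantzig-type bound $\|p\hat{\S}\boldsymbol{\gamma}^*-(\tilde{\boldsymbol{\mu}}_1-\tilde{\boldsymbol{\mu}}_2)\|_\infty=O_p(\sqrt{\Delta_p\log p/n})$. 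The point is that in the high-dimensional elliptical model $p\S\to\L$ because $\u^\prime\L\u$ concentrates at $\tr(\L)/p=1$ for a uniform direction $\u$; writing the $j$th coordinate of $p\hat{\S}\boldsymbol{\gamma}^*$ as an average of $pU_{ij}(U_i^\prime\boldsymbol{\gamma}^*)$, its mean is $\delta_j$ up to a small bias and its variance is $O(\Delta_p)$, since $U_i^\prime\boldsymbol{\gamma}^*$ has variance of order $\boldsymbol{\gamma}^{*\prime}\L\boldsymbol{\gamma}^*/p=\Delta_p/p$ while each $U_{ij}=O_p(p^{-1/2})$. A Bernstein bound across the $n$ observations and a union bound over the $p$ coordinates then yield the $\sqrt{\Delta_p\log p/n}$ rate. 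With $\lambda_n=C\sqrt{\Delta_p\log p/n}$ and $C$ large, these bounds make $\boldsymbol{\gamma}^*$ feasible in (\ref{cl}) with probability tending to one, so optimality forces $\|\hat{\boldsymbol{\gamma}}\|_1\le\|\boldsymbol{\gamma}^*\|_1=:s$ and $\|\hat{\boldsymbol{\gamma}}-\boldsymbol{\gamma}^*\|_1\le 2s$.

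Feasibility of both vectors gives $\|p\hat{\S}(\hat{\boldsymbol{\gamma}}-\boldsymbol{\gamma}^*)\|_\infty\le 2\lambda_n$; trading $p\hat{\S}$ for $\L$ at the cost of $\|p\hat{\S}-\L\|_\infty\|\hat{\boldsymbol{\gamma}}-\boldsymbol{\gamma}^*\|_1$ yields $\|\L(\hat{\boldsymbol{\gamma}}-\boldsymbol{\gamma}^*)\|_\infty=O_p\{(\sqrt{\Delta_p}+s)\sqrt{\log p/n}\}$, and $\ell_1$--$\ell_\infty$ duality with $\boldsymbol{\delta}=\L\boldsymbol{\gamma}^*$ bounds both $e_1:=\boldsymbol{\delta}^\prime(\hat{\boldsymbol{\gamma}}-\boldsymbol{\gamma}^*)$ and $e_2:=(\hat{\boldsymbol{\gamma}}-\boldsymbol{\gamma}^*)^\prime\L(\hat{\boldsymbol{\gamma}}-\boldsymbol{\gamma}^*)$ by $O_p\{(s\sqrt{\Delta_p}+s^2)\sqrt{\log p/n}\}$. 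Writing $\boldsymbol{\delta}^\prime\hat{\boldsymbol{\gamma}}=\Delta_p+e_1$ and $\hat{\boldsymbol{\gamma}}^\prime\bms\hat{\boldsymbol{\gamma}}=\Delta_p+2e_1+e_2$ and expanding, the plug-in ratio obeys $\boldsymbol{\delta}^\prime\hat{\boldsymbol{\gamma}}/(\hat{\boldsymbol{\gamma}}^\prime\bms\hat{\boldsymbol{\gamma}})^{1/2}-\Delta_p^{1/2}=O_p\{(e_1^2/\Delta_p-e_2)/\Delta_p^{1/2}\}$, so the argument error entering $\Psi$ is $O_p\{(s+s^2/\Delta_p^{1/2})\sqrt{\log p/n}\}$ together with the median contribution $r/\Delta_p^{1/2}$. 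Finally $|R_n-R|$ is bounded by the mean value theorem by the density of $\Psi$ times these argument errors; since $\Psi$ is continuous and, under (C2), asymptotically Gaussian (the standardized projection $\u^\prime\O^{1/2}(\X-\boldsymbol{\mu}_1)$ inherits normality from the concentration of the radial factor), its tail near $-\tfrac12\Delta_p^{1/2}$ supplies the extra powers of $\Delta_p$ that turn the raw errors into exactly the quantities $\tfrac{s}{\Delta_p^{1/2}}\sqrt{\log p/n}$ and $\tfrac{s^2}{\Delta_p^2}\sqrt{\log p/n}$ appearing in (\ref{the1}); these are $o(1)$ by hypothesis, giving $R_n-R\to0$.

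The main obstacle is the sharp spatial-sign Dantzig bound of the second step, namely obtaining the $\sqrt{\Delta_p}$ rate rather than the crude $s\sqrt{\log p/n}$ coming from $\|p\hat{\S}-\L\|_\infty\|\boldsymbol{\gamma}^*\|_1$. Three effects must be handled together: the nonlinearity of the sign map $U(\cdot)$, the deterministic bias $p\,\mathbb{E}[U(\X-\boldsymbol{\mu}_1)U(\X-\boldsymbol{\mu}_1)^\prime]-\L$ coming from the fact that $p\S$ only approximates $\L$, and, most delicately, that $\hat{\S}$ is built from the \emph{estimated} centers $\tilde{\boldsymbol{\mu}}_k$, so the perturbation of $U(\X_i-\tilde{\boldsymbol{\mu}}_k)$ around $U(\X_i-\boldsymbol{\mu}_k)$ must be shown not to inflate the rate. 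Proving that both the bias and this plug-in perturbation are $o(\sqrt{\Delta_p\log p/n})$ — which is where conditions (C2)(1)--(C2)(3) on the radial variables $\xi_i,\nu_i$ are consumed — is the technical heart; the remaining steps are direct adaptations of \cite{cai2011direct}.
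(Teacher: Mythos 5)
Your proposal follows the same architecture as the paper's proof: show that $\boldsymbol{\gamma}^*=\L^{-1}\boldsymbol{\delta}$ is feasible for the constraint in (\ref{cl}), use $\ell_1$-optimality to get $\|\hat{\boldsymbol{\gamma}}\|_1\le\|\boldsymbol{\gamma}^*\|_1$ and $\|p\hat{\S}(\hat{\boldsymbol{\gamma}}-\boldsymbol{\gamma}^*)\|_\infty\le 2\lambda_n$, control the inner products $\boldsymbol{\delta}^{\prime}(\hat{\boldsymbol{\gamma}}-\boldsymbol{\gamma}^*)$ and $\hat{\boldsymbol{\gamma}}^{\prime}\bms\hat{\boldsymbol{\gamma}}-\Delta_p$ by H\"older, expand the plug-in ratio, and finish with a case split on whether $\Delta_p$ is bounded. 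Your perturbation algebra (the bounds on $e_1,e_2$ and the second-order cancellation in $(\Delta_p+e_1)/\sqrt{\Delta_p+2e_1+e_2}$) matches the chain (\ref{the1eq3})--(\ref{the1eq12}); the spatial-median contribution you call $r$ is handled identically in (\ref{the1eq4})--(\ref{the1eq5}). One minor difference: for large $\Delta_p$ the paper simply shows both $R_n$ and $R$ are at most $\exp(-CM)$ rather than invoking a mean-value/density argument, which avoids your appeal to asymptotic normality of $\Psi$.

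The substantive divergence is the feasibility lemma, and it cuts both ways. You insist on the sharp Cai--Liu-style bound $\|p\hat{\S}\boldsymbol{\gamma}^*-(\tilde{\bmu}_1-\tilde{\bmu}_2)\|_\infty=O_p(\sqrt{\Delta_p\log p/n})$ via a coordinate-wise Bernstein argument with variance $O(\Delta_p)$, and you correctly flag the sign-map nonlinearity, the bias of $p\S$ relative to $\L$, and the plug-in of $\tilde{\bmu}_k$ as the technical heart --- but you only sketch this step, and it is precisely the part of your proposal that is asserted rather than proved. The paper does not prove that sharp bound at all: its lemma (\ref{lm1}) uses the crude H\"older estimate $\|\L^{-1}\boldsymbol{\delta}\|_1\|p\hat{\S}-p\S\|_\infty$ together with the cited rates $\|p\hat{\S}-p\S\|_\infty=O_p(\sqrt{\log p/n})$ and $\|\tilde{\bmu}_k-\bmu_k\|_\infty=O_p(\sqrt{\log p/n})$, which is far more elementary but yields $\lambda_n$-feasibility only when $\|\L^{-1}\boldsymbol{\delta}\|_1=O(\Delta_p^{1/2})$ --- a condition not implied by (\ref{the1}), which permits $\|\L^{-1}\boldsymbol{\delta}\|_1/\Delta_p^{1/2}\to\infty$. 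So your instinct that the crude bound is insufficient under the stated hypotheses is sound, and your proposed sharp lemma is what would be needed to cover the full range of (\ref{the1}); but to turn your proposal into a complete proof you must actually establish that concentration bound for the spatial-sign statistic (including the dependence induced by centering at $\tilde{\bmu}_k$), whereas the paper's shortcut is complete as written only under the stronger implicit assumption on $\|\L^{-1}\boldsymbol{\delta}\|_1$.
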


\begin{theorem}\label{th2}
Let $\lambda_n=C \sqrt{\Delta_p \log p / n}$ with $C$ being a sufficiently large constant and $\frac{n}{p\log p} \to 0$. Suppose (C1)-(C2) hold and
$$
\|\boldsymbol{\L^{-1}} \boldsymbol{\delta}\|_1 \Delta_p^{1 / 2}+\|\boldsymbol{\L^{-1}} \boldsymbol{\delta}\|_1^2=o\left(\sqrt{\frac{n}{\log p}}\right).
$$
Then
$$
\frac{R_n}{R}-1=O\left(\left(\|\boldsymbol{\L^{-1}} \boldsymbol{\delta}\|_1 \Delta_p^{1 / 2}+\|\boldsymbol{\L^{-1}} \boldsymbol{\delta}\|_1^2\right) \sqrt{\frac{\log p}{n}}\right)
$$
with probability greater than $1-O\left(p^{-1}\right)$. In particular, if (C1)-(C2) hold and
$$
\|\boldsymbol{\L^{-1}} \boldsymbol{\delta}\|_0 \Delta_p=o\left(\sqrt{\frac{n}{\log p}}\right),
$$
then
$$
\frac{R_n}{R}-1=O\left(\|\boldsymbol{\L^{-1}} \delta\|_0 \Delta_p \sqrt{\frac{\log p}{n}}\right)
$$
with probability greater than $1-O\left(p^{-1}\right)$.
\end{theorem}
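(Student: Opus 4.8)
The plan is to transfer the analysis of the Gaussian LPD rule in \cite{cai2011direct} to the spatial-sign setting by showing that the two key estimators behave, up to the correct scaling, exactly as their sample-moment counterparts. The proof of Theorem~\ref{th2} should parallel Theorem~\ref{th1}, but it tracks the \emph{rate} at which $R_n/R - 1$ vanishes rather than merely establishing $R_n - R \to 0$. The central quantities are the two standardized margins appearing in $R_n$, namely $(\tilde{\bmu}-\bmu_k)^\prime\hat{\bm\gamma}/(\hat{\bm\gamma}^\prime\bms\hat{\bm\gamma})^{1/2}$, and I expect the entire argument to reduce to controlling $|\hat{\bm\gamma}^\prime\bms\hat{\bm\gamma} - \Delta_p|$, $|\hat{\bm\gamma}^\prime\bm\delta - \Delta_p|$, and $|(\tilde{\bmu}-\bmu)^\prime\hat{\bm\gamma}|$, each up to the desired order.

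First I would establish the estimation-error bounds for the building blocks under (C1)--(C2). The spatial median satisfies $\|\tilde{\bmu}_k - \bmu_k\|_\infty = O_p(\sqrt{\log p/n})$ by the consistency results guaranteed under (C2), following \cite{feng2024spatial}; this is where the sub-Gaussianity of $\nu_i$ and the moment bounds on $\zeta_k$ are used. In parallel, the spatial-sign covariance matrix obeys a concentration bound of the form $\|p\hat{\S} - \L^{-1}\text{-scaled target}\|_\infty = O_p(\sqrt{\log p/n})$, with the condition $n/(p\log p)\to 0$ ensuring the bias from using $\tilde{\bmu}_k$ in place of $\bmu_k$ inside $U(\cdot)$ is negligible relative to the stochastic fluctuation. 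The requirement $\limsup_p \lambda_{\max}(\S) < 1-\psi$ keeps $p\hat{\S}$ well-conditioned so that $\hat{\bm\gamma}$ is a stable analogue of $\bm\beta^*$. Because $p\hat{\S}$ plays the role of $\hat{\bms}$ and $\tilde{\bmu}_1-\tilde{\bmu}_2$ plays the role of $\hat{\bmu}_1-\hat{\bmu}_2$ in the program \eqref{cl}, the Dantzig-type feasibility argument of \cite{cai2011direct} applies verbatim once these two $\ell_\infty$ bounds are in hand: the true target $\bm\gamma^* = \L^{-1}\bm\delta$ is feasible with high probability for $\lambda_n = C\sqrt{\Delta_p\log p/n}$, yielding $\|\hat{\bm\gamma}\|_1 \le \|\bm\gamma^*\|_1$ and the basic inequality $\|p\hat{\S}(\hat{\bm\gamma}-\bm\gamma^*)\|_\infty \le 2\lambda_n$.

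Next I would convert these into bounds on the three scalar functionals. Writing $\hat{\bm\gamma}^\prime\bm\delta - \Delta_p = (\hat{\bm\gamma}-\bm\gamma^*)^\prime\bm\delta$ and bounding via $\|\hat{\bm\gamma}-\bm\gamma^*\|_1$ times an $\ell_\infty$ control on $p\hat{\S}\bm\gamma^* - \bm\delta$, and similarly expanding $\hat{\bm\gamma}^\prime\bms\hat{\bm\gamma}$ around $\Delta_p$, I obtain that both deviations are $O_p\!\big((\|\L^{-1}\bm\delta\|_1\Delta_p^{1/2} + \|\L^{-1}\bm\delta\|_1^2)\sqrt{\log p/n}\big)$; the $\ell_0$ version follows by replacing the $\ell_1$ cross terms with $\|\L^{-1}\bm\delta\|_0\Delta_p$ using sparsity and the eigenvalue bounds of (C1). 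The term $(\tilde{\bmu}-\bmu)^\prime\hat{\bm\gamma}$ is controlled by $\|\tilde{\bmu}-\bmu\|_\infty\|\hat{\bm\gamma}\|_1$, which is of the same or smaller order. Finally I would feed these into a Taylor expansion of $\Psi$ about the optimal margin $-\tfrac12\Delta_p^{1/2}$: since $\Psi$ is a fixed symmetric distribution function with bounded density near that point, $R_n - R$ equals $\Psi^\prime(\cdot)$ times the margin perturbation plus a second-order remainder, and dividing by $R$ (which is bounded away from zero and one under $\Delta_p \ge c_1$) produces the stated multiplicative rate.

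The hard part will be the spatial-sign covariance concentration, specifically showing that replacing $\bmu_k$ by the estimated $\tilde{\bmu}_k$ inside $U(\bm X_i - \tilde{\bmu}_k)$ contributes only a negligible perturbation to $\|p\hat{\S} - \L^{-1}\text{-target}\|_\infty$. This is precisely where the condition $n/(p\log p)\to 0$ enters: the spatial-sign map is nonlinear, so its linearization around the true median introduces a remainder whose magnitude is governed by $\|\tilde{\bmu}_k - \bmu_k\|_2^2 \asymp p/n$ through the $\nu_i$ scaling, and this must be shown to be dominated by the $\sqrt{\log p/n}$ fluctuation. I would handle this by a perturbation expansion of $U$ combined with the moment bounds $\zeta_k\zeta_1^{-k} < \zeta$ and the sub-Gaussian control of $\nu_i$ from (C2), which together keep the cross-term uniform over coordinates; everything downstream is then a mechanical adaptation of the \cite{cai2011direct} calculus.
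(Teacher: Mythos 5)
Your overall architecture (Dantzig feasibility of $\bm\gamma^*=\L^{-1}\bm\delta$, the basic inequality $\|p\hat{\S}(\hat{\bm\gamma}-\bm\gamma^*)\|_\infty\le 2\lambda_n$, conversion to bounds on $\hat{\bm\gamma}^\prime\bms\hat{\bm\gamma}$, $\hat{\bm\gamma}^\prime\bm\delta$ and $(\tilde{\bmu}-\bmu)^\prime\hat{\bm\gamma}$, then an expansion of $\Psi$ and of the ratio $R_n/R$) is the same route the paper takes, and the $\ell_0$ refinement via the self-bounding chain $\|\hat{\bm\gamma}-\bm\gamma^*\|_1\le\|\L^{-1}\bm\delta\|_0\|\hat{\bm\gamma}-\bm\gamma^*\|_\infty\le\|\L^{-1}\|_{L_1}\|\L^{-1}\bm\delta\|_0\,\|\bms(\hat{\bm\gamma}-\bm\gamma^*)\|_\infty$ is exactly what the paper does. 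However, there is a genuine error in what you identify as "the hard part": the condition $n/(p\log p)\to 0$ has nothing to do with plugging $\tilde{\bmu}_k$ into $U(\cdot)$. Its role in the paper's proof is to kill the \emph{deterministic bias of the population spatial-sign covariance matrix}: for elliptical distributions $p\S$ is not equal to $\bms$ (equivalently $\L$) but only approximates it, with $\|p\S-\bms\|_\infty=O(p^{-1/2})$; combining this with the stochastic bound $\|p\hat{\S}-p\S\|_\infty=O_p(\sqrt{\log p/n})$ gives $\|p\hat{\S}-\bms\|_\infty=O_p(\sqrt{\log p/n})$ precisely when $p^{-1/2}=o(\sqrt{\log p/n})$, i.e.\ $n/(p\log p)\to 0$. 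Your proposal never acknowledges that $p\S\neq\L$, which is the one place where the spatial-sign setting genuinely departs from a "verbatim" transfer of the Cai--Liu calculus.

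Worse, the quantitative plan you give for the median plug-in would fail on its own terms. You claim the linearization remainder is governed by $\|\tilde{\bmu}_k-\bmu_k\|_2^2\asymp p/n$ and "must be shown to be dominated by the $\sqrt{\log p/n}$ fluctuation"; but in the regime of the theorem $p\log p\gg n$, so $p/n\to\infty$ and that domination is impossible. The correct accounting is that the perturbation of the unit vector $U(\bm X_i-\tilde{\bmu}_k)$ relative to $U(\bm X_i-\bmu_k)$ scales like $\|\tilde{\bmu}_k-\bmu_k\|_2/\|\bm X_i-\bmu_k\|_2\asymp\sqrt{p/n}/\sqrt{p}=n^{-1/2}$ (this is where the moment conditions on $\zeta_k$ and the sub-Gaussianity of $\nu_i$ in (C2) enter), so high dimension actually \emph{attenuates} the plug-in effect rather than threatening it. As written, your argument both misattributes the role of $n/(p\log p)\to 0$ and sets up a domination requirement that contradicts the theorem's own assumption; you need to replace that step with the bias bound $\|p\S-\bms\|_\infty=O(p^{-1/2})$ and the relative-error analysis of the sign map.
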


Theorem \ref{th1} and \ref{th2} is similar to the results in Theorem 2 and 4 in \cite{cai2011direct}. Theorem \ref{th1} show the consistency of our proposed method and Theorem \ref{th2} establish the rate of convergence.

\section{Simulation Studies} \label{sec:simu}
In this section, we investigate the empirical performance of the SSLDA method. 

\subsection{Implementation of SSLDA}

The estimate of $\bm \gamma^*$ is obtained by solving $\ell_1$ minimization problem of (\ref{cl}). This convex optimazation problem can be reformulated as the following linear program
$$\min \sum\limits_{j=1}^p u_j$$
\begin{align}\label{LP}
	\text{subject to:}&-\gamma_j\leq u_j~~~\text{for all} 1\leq j \leq p,\nonumber\\
	& +\gamma_j\leq u_j~~~\text{for all} 1\leq j \leq p,\\
	& -p\hat{\bm \sigma}_k^\prime \hat{\bm \gamma}_k + \tilde{\bm \delta}_k\leq \lambda_n~~~\text{for all} 1\leq k\leq p, \nonumber\\
	& +p\hat{\bm \sigma}_k^\prime \hat{\bm \gamma}_k+ \tilde{\bm \delta}_k\leq \lambda_n~~~\text{for all} 1\leq k\leq p,\nonumber
\end{align}
where $(\tilde{\delta}_1,\cdots,\tilde{\delta}_p):=\tilde{\bm \delta}$ and $(\hat{\bm \sigma}_1,\cdots,\hat{\bm \sigma}_p):=\hat{\S}$. We applied the CLIME method \citep{cai2011constrained} to solve (\ref{LP}). We replace $\hat{\S}$ in (\ref{cl}) by $\hat{\S}_\rho = \hat{\S}+\rho \bm I_{p\times p}$ with a small positive number $\rho$ (e.g., $\rho=\sqrt{\log p/n}$). 

The algorithm's tuning parameter $\lambda=\lambda_n$ can be optimized through empirical 10-fold cross-validation (CV). To implement this, partition the sets $\{1,2,\cdots,n_1\}$ and $\{1,2,\cdots,n_2\}$ into $2K$ subgroups $G_{ik}$, where $i=1,2, k=1,2,\cdots, K$. This division naturally separates the sample data $\left\{\mathbf{X}_i ; 1 \leq i \leq n_1\right\}$ and $\left\{\mathbf{Y}_j ; 1 \leq j \leq n_2\right\}$ into $K$ validation subsets $ \mathscr{X}_k:=\{\mathbf{X}_i, \mathbf{Y}_j: i\in G_{1k},j\in G_{2k}\}, 1\leq k\leq K$. Denote $\tilde{\bm\mu}_{(k)}$, $\hat{\bm S}_{(k)}$ be defined in (\ref{emu}) and (\ref{hats}) derived from $\left\{\mathbf{X}_k, \mathbf{Y}_k; 1 \leq k \leq n_1\right\}$\textbackslash$\mathscr{X}_k$. Based on $\tilde{\bm\mu}_{(k)}$, $\hat{\bm S}_{(k)}$, we can obtain $\hat{\boldsymbol{\gamma}}$ for a given $\lambda_n$ by (\ref{cl}). The final selection of $\lambda$ boils down to
$$
\hat{\lambda}=\max_\lambda \sum\limits_{k=1}^{K}\left(\sum\limits_{i\in G_{1k}}I_{i1}^{(k)}+\sum\limits_{j\in G_{2k}}I_{j2}^{(k)}\right),
$$
where $I_{j1}^{(k)}=1$ if $\mathbf{X}_i \in \mathscr{X}_k$ stisfies (\ref{scl1}), else $I_{j1}^{(k)}=0$; let $I_{j2}^{(k)}=1$ if $\mathbf{Y}_j\in \mathscr{X}_k$ not stisfies (\ref{scl1}), else $I_{j2}^{(k)}=0$, and we choose $K=10$ in this paper.

\subsection{Simulation Results}

We compare the numerical performance of the SSLDA method with fellowing methods:

\begin{itemize}
\item LS-LDA: the least square formulation for classification proposed by \cite{mai2012direct}.

\item CODA: Copula discriminant analysis classifier (CODA) for high-dimensional data proposed by \cite{han2013coda}.

\item LDA-CLIME: Linear programming discriminant for high-dimensional data clssification using CLIME method proposed by \cite{cai2011constrained}.


\item SSLDA: Sparse spatial-sign based linear discriminant analysis.
	
\end{itemize}

In the simulation studies, we fix the sample sizes $n=400$
and varied $p$ to be $\{100,200,300\}$.
Let $\boldsymbol{\mu}_1=\boldsymbol{0}, \boldsymbol{\mu}_2=(1,\cdots,1,0,\cdots,0)$, where the number of 1's is $s_0=10$. We generate $p$ dimensional predictors $x$ from the following four different elliptical distributions:

\begin{itemize}
	\item[(\uppercase\expandafter{\romannumeral1})]  Multivariate normal distribution: $\mathbf{Z}\sim N\left(\boldsymbol{\mu}_1, \boldsymbol{\Sigma}\right)$.
	
	\item[(\uppercase\expandafter{\romannumeral2})]  Multivariate $t_2$-distribution, data are generated from standardized $t_2/\sqrt{2}$ with mean $\boldsymbol{\mu}_1$ and $\boldsymbol{\Sigma}$.
	
	\item[(\uppercase\expandafter{\romannumeral3})]  Standardized multivariate mixture normal distribution $MN_{N,\kappa,9}=[\kappa N(\boldsymbol{0},\boldsymbol{\Sigma})+(1-\kappa)N(\boldsymbol{0},9\boldsymbol{\Sigma})]/\sqrt{\kappa+9(1-\kappa)}$. $\kappa$ is chosen to be 0.8.

 	\item[(\uppercase\expandafter{\romannumeral4})]  Cauchy distribution.
	
\end{itemize}

Based on the above four distributions, we considered the following two models.

\begin{itemize}
	\item \textbf{Model 1}: $\boldsymbol{\Sigma}_{i,j}=1-0.5\times\mathbbm{1}\{|i-j|\neq0\}, i,j=1,\cdots,p$. 
	
	\item \textbf{Model 2}: $\boldsymbol{\Sigma}_{i,j}=0.8^{|i-j|}, i,j=1,\cdots,p$.
	
	
\end{itemize}



In line with the simulation parameters outlined in \cite{cai2011direct}, this study fix $n_1=n_2=200$.
The average classification errors for the test samples and the standard deviations based on 100 replications are stated in Table \ref{tab1} and Table \ref{tab2}. Table \ref{tab1} and Table \ref{tab2} show the performance of SSLDA and three state-of-the-art methods across four distinct distributions of varying dimensionality. These four distributions serve to characterise the level of noise present in the original data distribution. The results show that SSLDA outperforms other methods across both Model 1 and Model 2. As the original distribution's tail thickens, SSLDA's benefits grow significantly, showcasing notable robustness, especially in Cauchy distribution.

\makeatletter\def\@captype{table}\makeatother
\begin{table}
	\caption{The average classification error and the standard deviation (in brackets) for the test samples in percentage for Model 1 over 100 Monte Carlo replications. (\%)}
	\label{tab1}
	\centering
 	\begin{tabular}{cccccc}
        \toprule
		Distribution&$p$ & SSLDA & LDA-CLIME &CODA & LS-LDA \\
		\hline  
		&100 &2.82(0.83) &3.20(1.04)  &2.75(0.83)   &2.98(0.93)  \\ 
		Normal&200 &3.23(0.97) &4.75(1.64)   &2.68(0.86)  & 2.71(0.82) \\
		&300&4.13(1.11) &3.84(1.10)    &2.34(0.77)& 2.57(0.83)\\
		\hline
		&100 &10.03(1.80) &11.89(1.94)& 10.84(1.49)& 11.26(1.90) \\ 
		$t_2$&200 & 10.60(1.57)  &12.24(1.73)  &10.40(1.72)&10.62(1.66)   \\
		&300 &11.47(1.79) &13.47(2.09)&10.64(1.91) &10.75(1.94) \\
		\hline  
	  &100 &7.58(1.28)&8.28(1.32)  &  8.10(1.29) &8.32(1.44) \\ 
	  $MN_{N,\kappa,9}$&200 & 7.22(1.40)& 8.51(1.36) &7.85(1.42) &8.15(1.48)\\
	  &300 & 7.84(1.43)& 8.65(1.43)   &7.59(1.25)&7.78(1.19)\\
		\hline  
	 &100 &15.64(1.97)& 20.31(2.28) &25.81(12.71) &20.60(8.43)\\ 
	 Cauchy&200 &16.33(2.26)& 20.49(2.61)&24.28(10.47)&18.19(2.52)\\
	 &300 &16.90(2.07)& 22.58(3.14)   &25.45(12.15)&17.53(2.40)\\
	\bottomrule
	\end{tabular}
\end{table}

\makeatletter\def\@captype{table}\makeatother
\begin{table}
	\caption{The average classification error and the standard deviation (in brackets) for the test samples in percentage for Model 2 over 100 Monte Carlo replications. (\%)}
	\label{tab2}
	\centering
 	\begin{tabular}{cccccc}
        \toprule
		Distribution&$p$ & SSLDA & LDA-CLIME &CODA & LS-LDA \\
		\hline  
		&100  &18.83(2.00) & 18.42(2.10)  & 18.08(2.14) & 17.93(1.97)\\
		Normal&200 &19.93(2.27) & 19.66(2.36)  & 18.01(2.16) & 18.51(2.27) \\ 
		&300 & 20.54(2.29)&20.34(2.12)& 18.59(1.77) &18.50(2.01)\\
		\hline  
	  &100  &23.67(2.17)  & 26.99(2.77) & 
        24.60(2.29) &26.28(2.81) \\
	  $t_2$&200 &24.91(2.19)    & 29.03(2.66)  
        &25.73(2.63)&26.82(2.96) \\
	  &300 &25.39(2.38)   & 28.78(2.53) 
        &25.79(2.96)&27.12(3.25) \\ 
		\hline
	  &100  & 22.78(2.38)&24.74(2.26)&23.42(2.42) &  
        23.96(2.29) \\
	  $MN_{N,\kappa,9}$&200 &22.83(2.21)& 25.62(2.51)&24.04(2.49) & 
        24.36(2.31) \\
	  &300 &23.99(2.54)& 25.68(2.90)&24.18(2.55)  
        &25.07(2.84)\\ 
		\hline  
	  &100  &27.40(1.95)&35.08(2.86)  & 37.78(7.96)& 
        33.93(3.55)  \\
	  Cauchy&200 &28.01(2.48) & 35.42(3.38) & 38.18(7.51)& 
        35.62(9.62) \\
	  &300 & 29.35(2.37) &35.74(3.24)& 38.47(6.81) 
        &35.58(7.18) \\ 
	    \bottomrule
	\end{tabular}
\end{table}

To further investigate the impact of truly influential features number on classification performance, we fixed $n_1=n_2=200$ and $p=100$, and examined the classification error rates of four distinct methods across different $s_0$ values.
We run the four methods on each $s_0\in [5,80]$, each repeated for 100 times. The averaged misclassification errors in percentage versus various $s_0$ are illustrated in Figure \ref{fig11}. It can be observed in Figure \ref{fig11} that SSLDA performs the best (blue curve) in different distributions, especially in heavy-tailed distributions (such as Cauchy distribution). The experimental results further highlight the superiority of the SSLDA method under high-dimensional heavy-tailed settings,  demonstrating its versatility in handling both sparse and dense mean vector configurations with equal effectiveness.

\begin{figure}
    \centering
    \subfloat{\includegraphics[width = 0.45\textwidth]{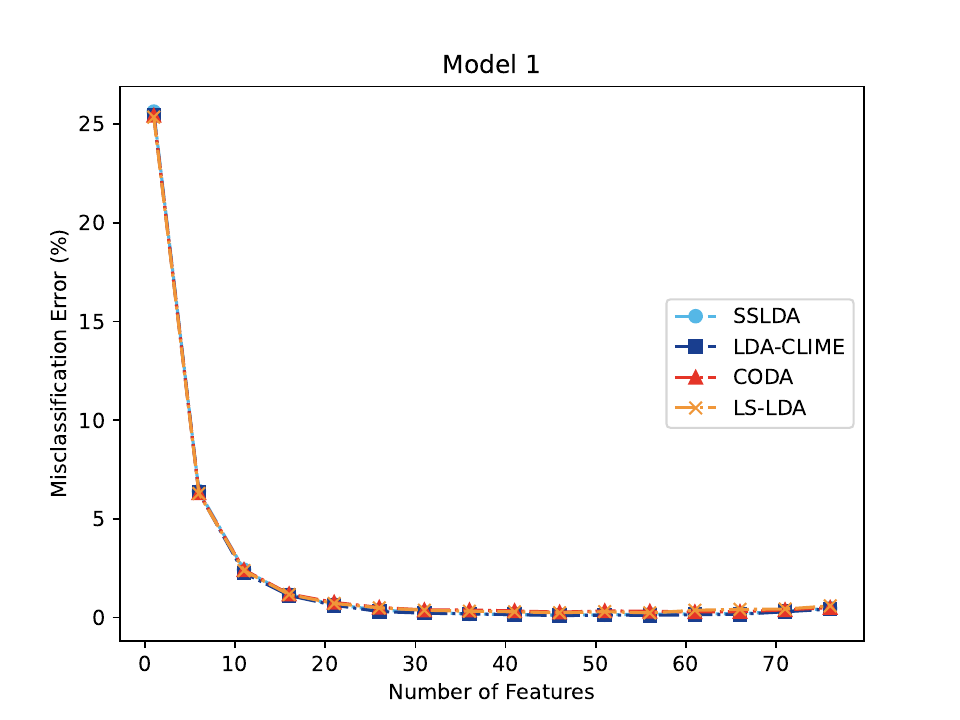}}
     \subfloat{\includegraphics[width = 0.45\textwidth]{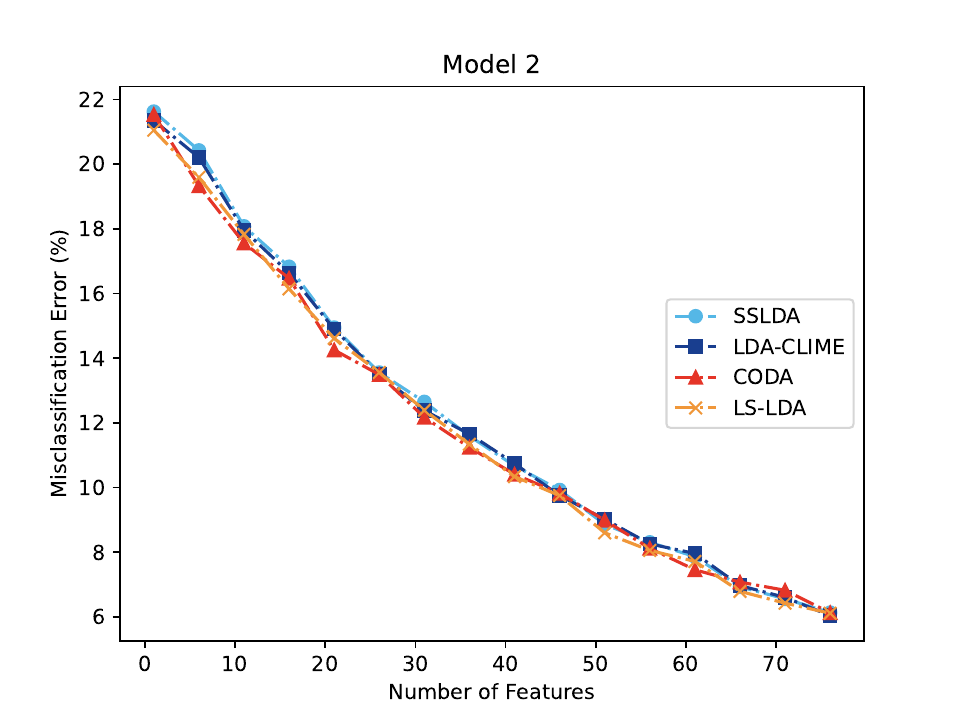}}\\
    (a) Normal distribution\\
    \subfloat{\includegraphics[width = 0.45\textwidth]{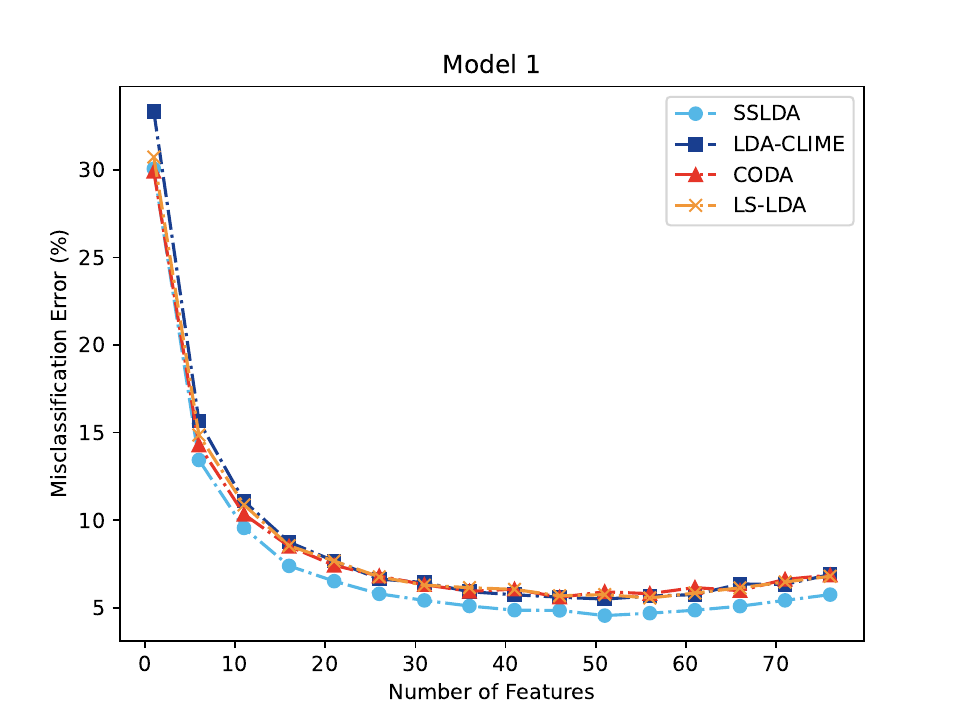}}
     \subfloat{\includegraphics[width = 0.45\textwidth]{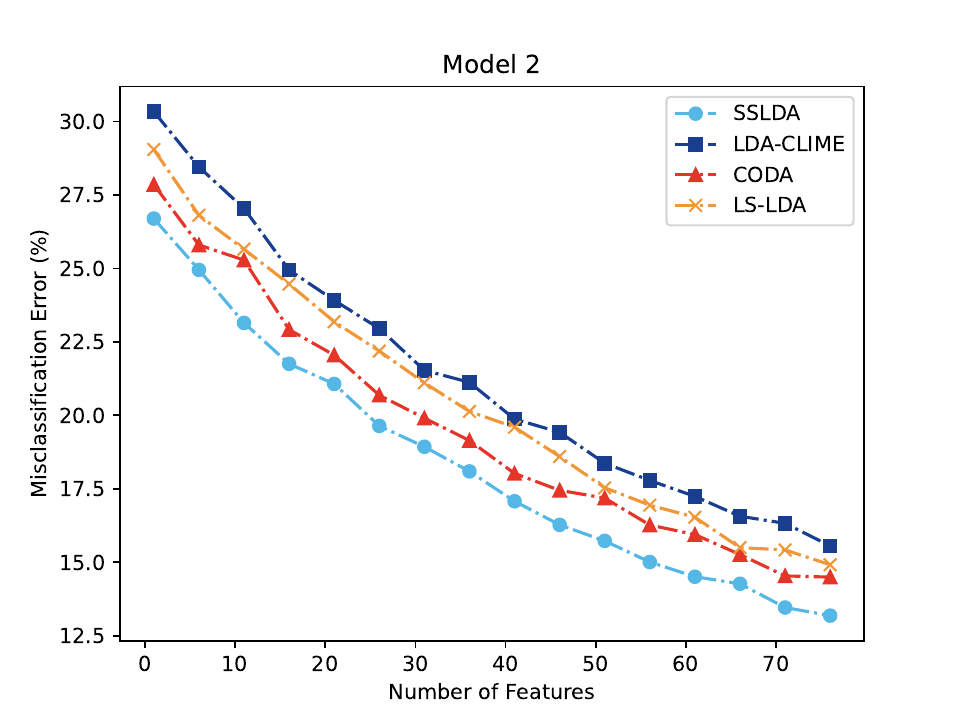}}\\
    (b) $t_2$ distribution\\
        \subfloat{\includegraphics[width = 0.45\textwidth]{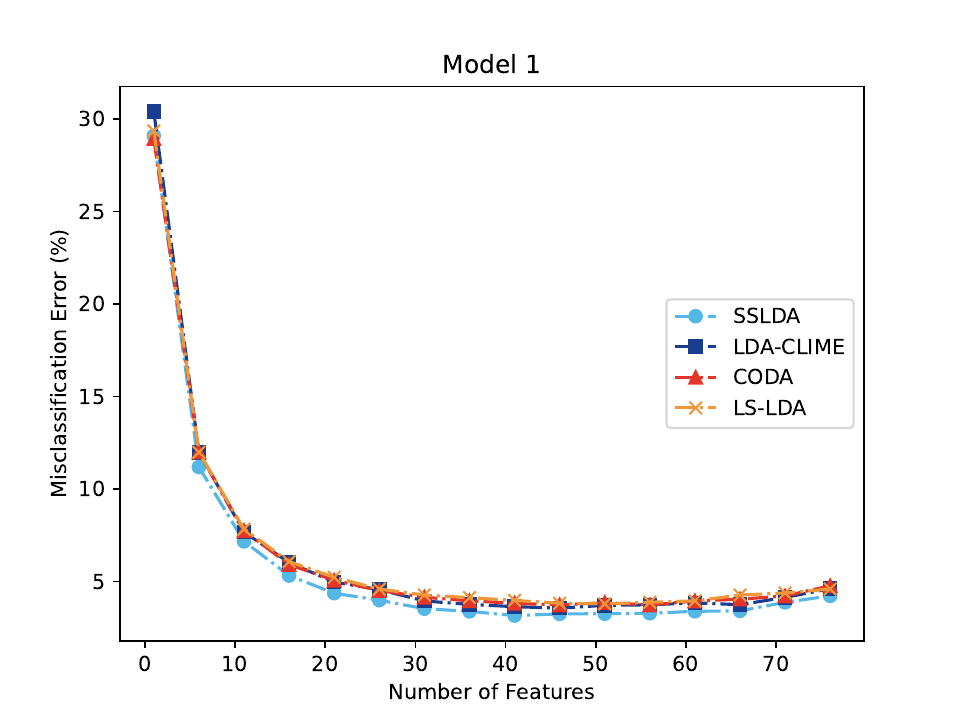}}
     \subfloat{\includegraphics[width = 0.45\textwidth]{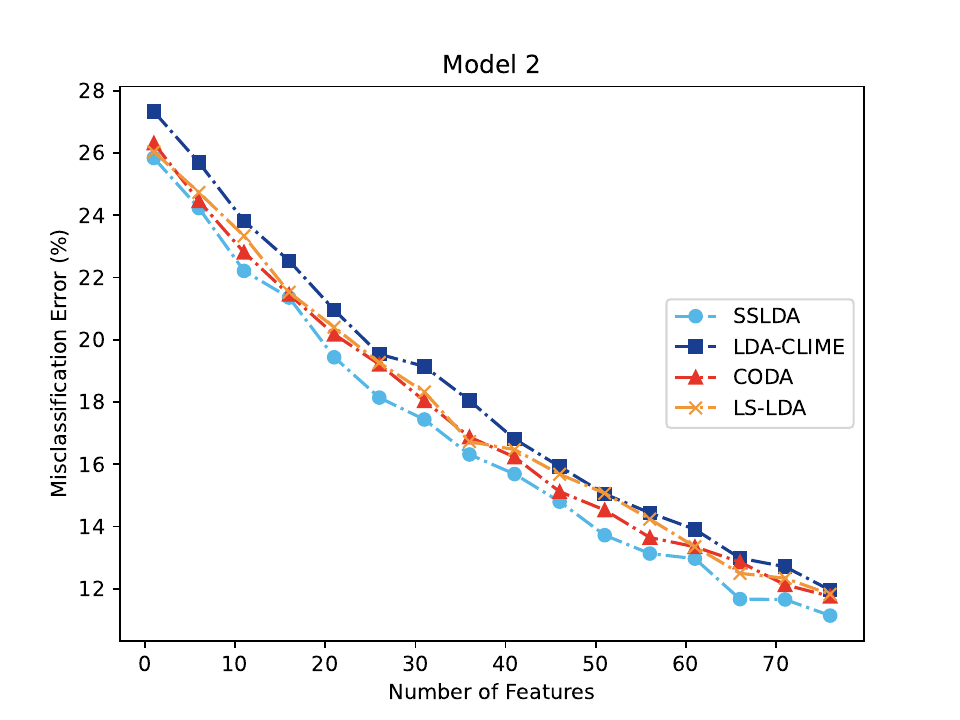}}\\
    (c) $MN_{N,\kappa,9}$ distribution\\
    \subfloat{\includegraphics[width = 0.45\textwidth]{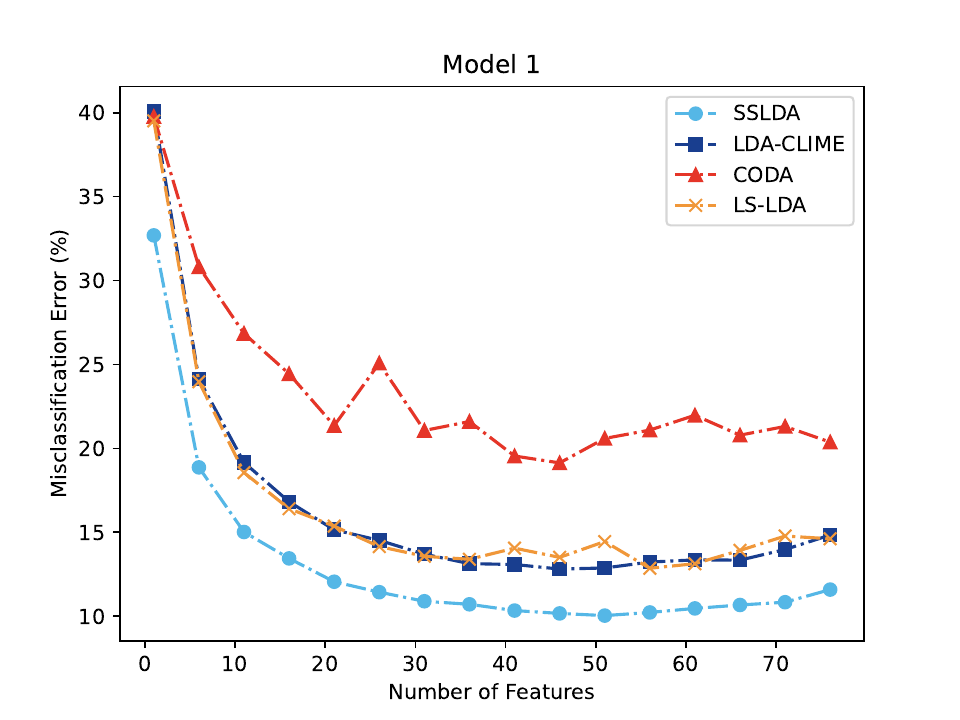}}
     \subfloat{\includegraphics[width = 0.45\textwidth]{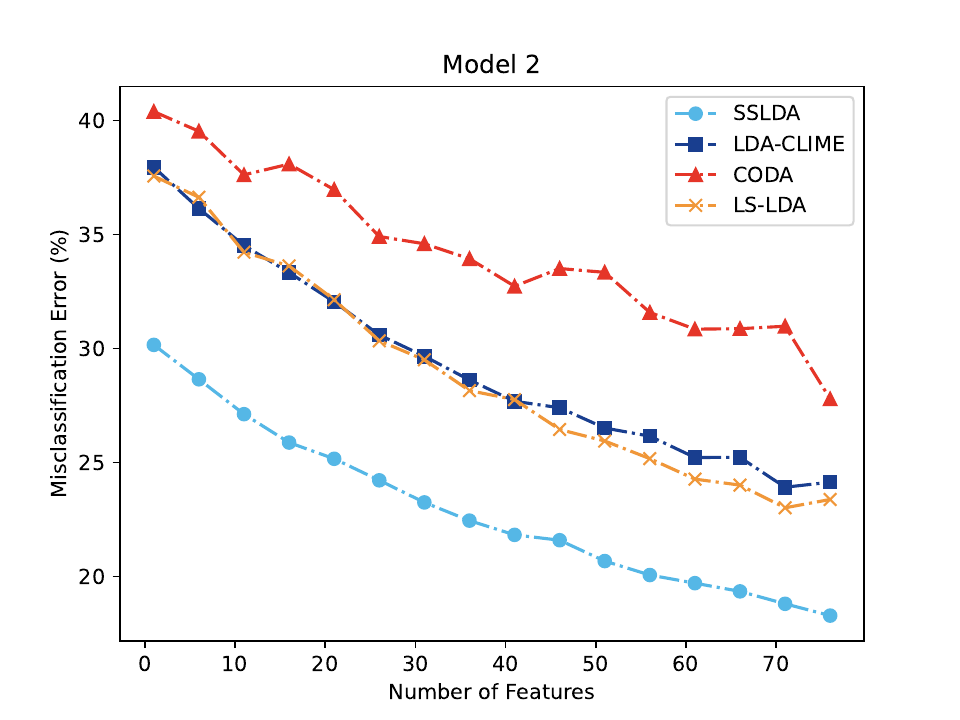}}\\
    (d) Cauchy distribution 
    \caption{The average misclassification error curves vary in different numbers of features $s_0$ on different distributions with four different methods over 100 Monte Carlo replications ($n_1=n_2=200,p=100$).}
    
    \label{fig11} 
\end{figure}
\FloatBarrier

\section{Real Data Application} \label{sec:real}
In this section, we apply the SSLDA classifier to the analysis of the real dataset to further examine the performance of the proposed rule. 


\textbf{Datasets.} There are 636 JPG images of two groups, braeburn apple and white cabbage, with quantities
of 492 and 144, respectively. All images are 100×100 pixels in size. This image set is sourced
from \href{https://www.kaggle.com/datasets/moltean/fruits}{https://www.kaggle.com/datasets/moltean/fruits}. The task of this dataset is to perform image classification on these images of Braeburn apples and white cabbages, to correctly distinguish between apples and cabbages. Figure \ref{figReal1} shows example images from each group along with the dataset's preprocessing stages. Initial real data refinement involved the steps below.

{\itshape Step 1.} To prevent the comparison method from being affected by imbalanced data, we randomly selected 144 images ($n_1=n_2=144$) from the braeburn apples image dataset for the experiment, to maintain consistency with the number of white cabbages images. 

{\itshape Step 2.} Each RGB image was converted to grayscale (black and white) using a assigned RGB ratio.
Subsequently, an ORB (Oriented FAST and Rotated BRIEF) keypoint detector \citep{rublee2011orb,daradkeh2021methods} was used to determine the descriptors of keypoints for each grayscale image, and then calculate the column mean of the descriptors for simple dimensionality reduction ($p=32$).
This process is implemented in JupyterLab 4.0.11 using the OpenCV library and the Python 3.6 programming language.

Following the preliminary processing stages of {\itshape Step 1.} and {\itshape Step 2.}, we obtain a final data matrix ($144\times 32$) for each group to distinguish braeburn apple from white cabbage. To evaluate method performance, we randomly split the data into equal training and testing sets (each 72 samples, i.e. $72\times 32$ matrix) without replacement. Each of the four methods is applied to the training set and assessed on the testing set, repeated 100 times.

\textbf{Results.} To evaluate algorithm efficacy in the classification task, the following metrics are employed:
\begin{align*} 
\text{Specificity} &= \frac{\text{TN}}{\text{TN}+\text{FP}},~~~
\text{Sensitivity} = \frac{\text{TP}}{\text{TP}+\text{FN}},\\
\text{Precision} &= \frac{\text{TP}}{\text{TP}+\text{FP}},~~~
\text{Accuracy} = \frac{\text{TP}+\text{TN}}{\text{TP}+\text{TN}+\text{FP}+\text{FN}},
\end{align*} 
where TP (True Positives) and TN (True Negatives) represent correct classifications for positive (class 1) and negative (class 2) cases, respectively. FP (False Positives) and FN (False Negatives), on the other hand, indicate misclassifications.
All these metrics fall within the range of 0 to 1. A higher score suggests that the classification algorithm is performing well, while a lower score indicates subpar performance. An algorithm hitting a perfect score of 1 across the board is essentially considered the gold standard.

Table \ref{tabreal1} shows the performance of different methods on the real-world dataset, which gives the four metrics of different methods. From Table \ref{tabreal1}, we can observe that SSLDA achieving the highest values for Specificity, Precision and Accuracy. Although SSLDA does not achieve the highest Sensitivity value, its performance is comparable to that of the other three methods. This results further confirms the reliability of SSLDA classifier method in high-dimensional scenario.

\begin{table}[ht]
	\caption{Comparisons of average (standard deviation) classification accuracy of Apple and Cabbage datasets over 100 replications.}
	\label{tabreal1}
	\centering
    \begin{tabular}{ccccc}
     \toprule
		Method & Specificity & Sensitivity&Precision & Accuracy \\
		\hline  
		SSLDA &0.9854(0.0131) & 0.9611(0.0219)&0.9852(0.0132)&0.9733(0.0130)\\ 
		LDA-CLIME & 0.9736(0.0255) &0.9639(0.0256) &0.9742(0.0240) &0.9688(0.0124) \\
		CODA&0.9541(0.0276)&0.9601(0.0272) &0.9550(0.0261) &0.9571(0.0200) \\
        LS-LDA&0.9686(0.0273)&0.9697(0.0211) &0.9694(0.0262)  &0.9691(0.0161)  \\
    \bottomrule
	\end{tabular}
\end{table}

\begin{figure}[htbp]
    \centering
    \subfloat{\includegraphics[width = 0.85\textwidth]{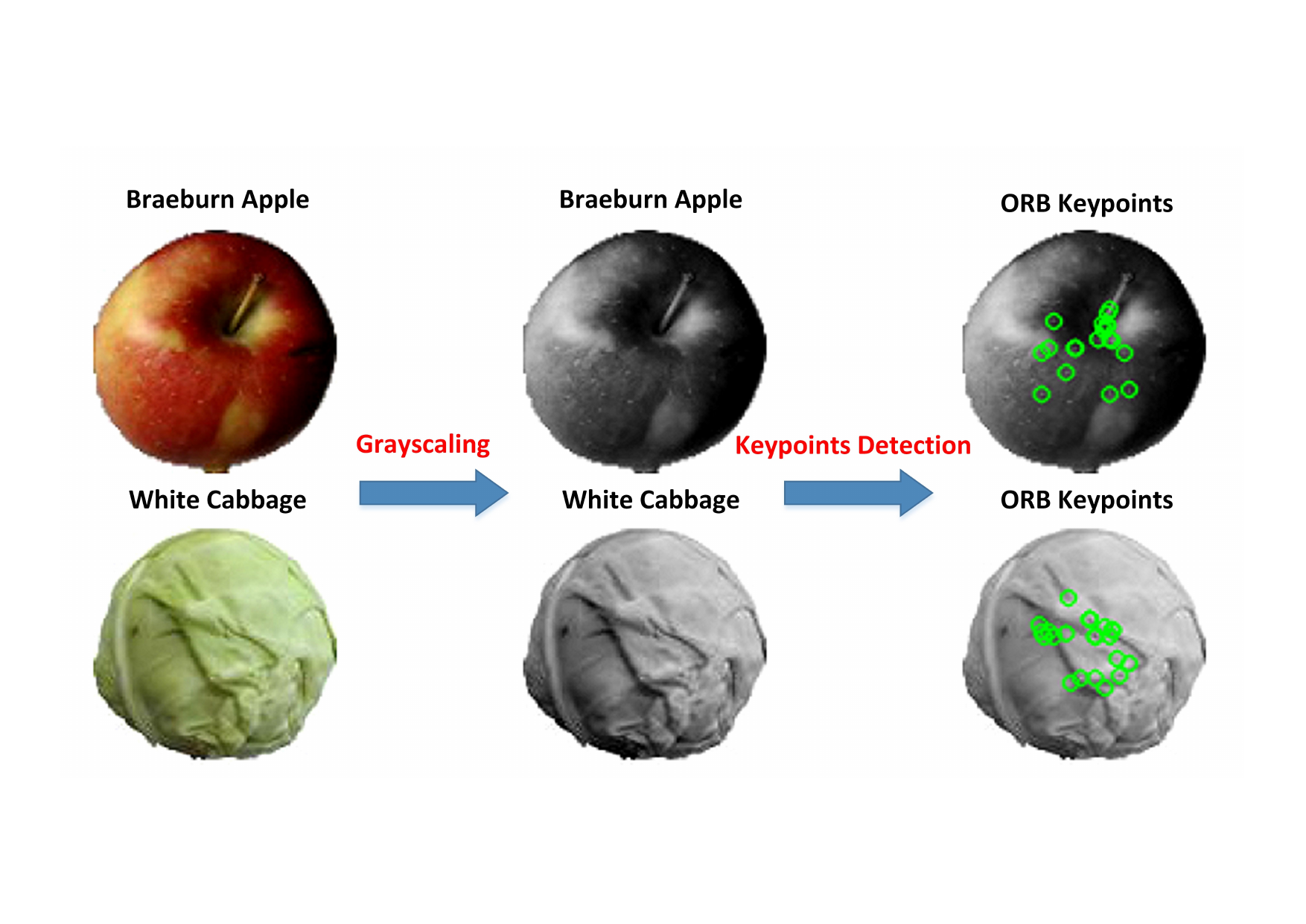}}    
    \caption{Preprocessing steps for the image dataset.}   
    \label{figReal1} 
\end{figure}
\FloatBarrier


\section{Discussion and Limitation}\label{sec:conclu}

Reliable classification for high-dimensional data, especially in the case of heavy-tail or untidy data, is typically a tricky for the applied statistician.
This paper proposed a robust classification approach that  capable of handling heavy-tailed data. The findings from the simulation trials showcase its superior enhanced finite sample prowess and notable computational efficiency both in synthetic data and real data. 
The primary limitation of this approach is its reliance on a stringent assumption of elliptical symmetric distribution.


{\small
	\bibliographystyle{plainnat}
	\bibliography{ref}

}

\newpage

{\Large Supplementray Material of "Spatial Sign based Direct Sparse Linear Discriminant Analysis for High Dimensional Data"}
\appendix
\section{Lemma}

\begin{lemma}
Under (C1) and (C2), we have with probability greater than $1-O(p^{-1})$,
\begin{align}\label{lm1}
\|p\hat{\S}\L^{-1}\bm \delta-(\tilde{\bmu}_1-\tilde{\bmu}_2)\|_\infty\le \lambda_n
\end{align}
\end{lemma}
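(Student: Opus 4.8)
The plan is to show that the population target $\bm\gamma^* = \L^{-1}\bm\delta$ is itself feasible for the constraint in (\ref{cl}) with high probability, which is exactly the content of (\ref{lm1}). The natural starting point is the decomposition
\begin{align*}
p\hat\S\L^{-1}\bm\delta - (\tilde\bmu_1 - \tilde\bmu_2) = \underbrace{p(\hat\S - \S)\bm\gamma^*}_{(\mathrm{I})} + \underbrace{(p\S - \L)\bm\gamma^*}_{(\mathrm{II})} - \underbrace{\big[(\tilde\bmu_1 - \bmu_1) - (\tilde\bmu_2 - \bmu_2)\big]}_{(\mathrm{III})},
\end{align*}
where I have used $\L\bm\gamma^* = \bm\delta$ and $\bmu_1 - \bmu_2 = \bm\delta$, and $\S$ denotes the population spatial-sign covariance matrix. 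I would bound each sup-norm by a constant multiple of $\sqrt{\Delta_p\log p/n}$ (or smaller) on an event of probability $1 - O(p^{-1})$, then absorb the constants into the choice of $C$ in $\lambda_n$.

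First I would dispose of (III) using the spatial-median theory invoked in Condition (C2): the Bahadur-type representation of $\tilde\bmu_k$ (as in \cite{feng2024spatial}) gives $\|\tilde\bmu_k - \bmu_k\|_\infty = O_p(\sqrt{\log p/n})$ on an event of probability $1 - O(p^{-1})$. Since $\Delta_p \ge c_1$ by (C1), this is automatically $O(\sqrt{\Delta_p\log p/n})$, so (III) is absorbed into $\lambda_n$. A companion step, standard in the spatial-sign literature, is to replace the plug-in signs $U(\X_i - \tilde\bmu_k)$ defining $\hat\S$ by the oracle signs $U(\X_i - \bmu_k)$; the Lipschitz behaviour of $U(\cdot)$ together with the median rate makes this substitution lower order, so I may treat $\hat\S$ as if built from the true centers in the analysis of (I).

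The term (I) is the crux of the concentration argument. Writing $p\hat\S\bm\gamma^* = \frac1n\sum_i p\,U_i(U_i'\bm\gamma^*)$ over all $n$ observations (oracle signs, each centered at its own group median), every summand is a bounded random vector whose $a$-th coordinate is $p\,U_{ia}(U_i'\bm\gamma^*)$. Using the elliptical representation $U_i = \L^{1/2}\U_i/\|\L^{1/2}\U_i\|$ with $\U_i$ uniform on the sphere and $\U_i'\L\U_i$ concentrating at $1$, the per-coordinate second moment evaluates to $p^2\,\mathbb{E}[U_{ia}^2(U_i'\bm\gamma^*)^2] \asymp \L_{aa}\,\bm\gamma^{*\prime}\L\bm\gamma^* = \L_{aa}\Delta_p$, which is $O(\Delta_p)$ by (C1). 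Hence the coordinatewise variance of the average is $O(\Delta_p/n)$, and a Bernstein inequality (the summands being bounded, with the sub-Gaussian control (C2)(3) handling the radial part) followed by a union bound over the $p$ coordinates yields $\|(\mathrm{I})\|_\infty = O(\sqrt{\Delta_p\log p/n})$ with probability $1 - O(p^{-1})$. The factor $\sqrt{\Delta_p}$, which comes precisely from $\bm\gamma^{*\prime}\L\bm\gamma^* = \Delta_p$, is exactly what matches the scaling of $\lambda_n$.

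The remaining term (II) is the genuinely new feature relative to \cite{cai2011direct}: in the Gaussian/sample-covariance setting one has $\hat\bms\bm\beta^*$ centered exactly at $\bm\delta$, whereas here $p\S\bm\gamma^* \ne \bm\delta$ and one must control the population bias $(p\S - \L)\bm\gamma^*$. I would expand $p\S_{ab} = p\,\mathbb{E}[(\L^{1/2}\U)_a(\L^{1/2}\U)_b/(\U'\L\U)]$ by writing $1/(\U'\L\U) = 1 - (\U'\L\U - 1) + \cdots$ around $\mathbb{E}[\U'\L\U] = \tr(\L)/p = 1$; the leading term reproduces $\L_{ab}$ and the first correction, a fourth-order moment of $\U$ on the sphere, is $O(1/p)$ per entry. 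Tracking this through $(p\S - \L)\bm\gamma^*$ and using $\|\L^{-1/2}\bm\delta\|_2^2 = \Delta_p$ shows $\|(\mathrm{II})\|_\infty$ is of smaller order than $\lambda_n$ in the stated regime. I expect this bias control to be the main obstacle, both because it has no analogue in the classical direct-estimation proof and because a sharp entrywise bound on $p\S - \L$ requires careful concentration of $\U'\L\U$ on the sphere; the sub-Gaussian assumption (C2)(3) together with the bounded-eigenvalue conditions (C1) and (C2)(2) are exactly what render this term negligible.
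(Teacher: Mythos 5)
Your decomposition into the three error sources --- the sign-covariance fluctuation $p(\hat\S-\S)\L^{-1}\bm\delta$, the population bias $(p\S-\L)\L^{-1}\bm\delta$, and the spatial-median errors --- is exactly the triangle-inequality split the paper uses, and your treatment of the median term (citing the rate $\|\tilde\bmu_k-\bmu_k\|_\infty\lesssim\sqrt{\log p/n}$ from the spatial-median theory) coincides with the paper's. Where you genuinely diverge is in the middle two terms. The paper handles both by H\"older's inequality, $\|\mathbf{A}\bm x\|_\infty\le\|\bm x\|_1\|\mathbf{A}\|_\infty$, combined with cited entrywise bounds $\|p\hat\S-p\S\|_\infty\lesssim\sqrt{\log p/n}$ and $\|p\S-\L\|_\infty=O(p^{-1/2})$, so each contribution is controlled by $\|\L^{-1}\bm\delta\|_1$ times a small quantity and then declared $\le\lambda_n/4$ ``by the assumption.'' You instead run a direct coordinatewise Bernstein argument on $p(\hat\S-\S)\L^{-1}\bm\delta$, extracting the variance $\asymp\L_{aa}\,\bm\delta'\L^{-1}\bm\delta=\L_{aa}\Delta_p$ so that the $\sqrt{\Delta_p}$ in $\lambda_n$ emerges intrinsically rather than through the crude product $\|\L^{-1}\bm\delta\|_1\cdot\sqrt{\log p/n}$; this is closer in spirit to Cai--Liu's original LPD lemma, is sharper, and avoids needing $\|\L^{-1}\bm\delta\|_1\lesssim\sqrt{\Delta_p}$ for that term (a condition the paper's version implicitly leans on). The price is that your two hard steps are only sketched: the Bernstein bound needs care because the summands $p\,U_{ia}(U_i'\bm\gamma^*)$ are bounded only by $p\|\bm\gamma^*\|_2$ in the worst case, so you must exploit the spherical representation to get a usable sub-exponential envelope; and your claimed $O(1/p)$ entrywise bias for $p\S-\L$ is more optimistic than the $O(p^{-1/2})$ the paper imports from the literature --- either suffices here, but the $O(1/p)$ claim would need its own justification. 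Also note that your (II) still ends up multiplied by $\|\L^{-1}\bm\delta\|_1$ (or an equivalent norm of $\bm\gamma^*$), so some condition tying $\|\L^{-1}\bm\delta\|_1$, $\Delta_p$, $p$ and $n$ together, beyond the bare (C1)--(C2) stated in the lemma, is unavoidable for that term --- the same implicit reliance the paper has.
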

\begin{proof}
By the triangle inequality, we have
\begin{align*}
&\|p\hat{\S}\L^{-1}\bm \delta-(\tilde{\bmu}_1-\tilde{\bmu}_2)\|_\infty\\
&\le \|p\hat{\S}\L^{-1}\bm \delta-p{\S}\L^{-1}\bm \delta\|_\infty+\|p{\S}\L^{-1}\bm \delta-\bm \delta\|_\infty+\|\tilde{\bmu}_1-\bmu_1\|_\infty+\|\tilde{\bmu}_2-\bmu_2\|_\infty
\end{align*}
According to the proof of Theorem 7.1 in 
\cite{feng2024spatial}, we have
\begin{align*}
\|\tilde{\bmu}_k-\bmu_k\|_\infty \le C\sqrt{\log p/n}, k=1,2, \|p\hat{\S}-p\S\|_\infty\le C\sqrt{\log p/n}
\end{align*}
for some large enough constant $C>0$ with probability larger than $1-O(p^{-1})$. So
\begin{align*}
\|p\hat{\S}\L^{-1}\bm \delta-p{\S}\L^{-1}\bm \delta\|_\infty \le \|\L^{-1}\bm \delta\|_1 \|p\hat{\S}-p\S\|_\infty \le \lambda_n/4.
\end{align*}
Additionally,
\begin{align*}
\|p{\S}\L^{-1}\bm \delta-\bm \delta\|_\infty=\|(p{\S}-\L)\L^{-1}\bm \delta\|_\infty\le \|\L^{-1}\bm \delta\|_1 \|p{\S}-\L\|_\infty \le \lambda_n/4
\end{align*}
by the assumption.
\end{proof}

\section{Proof of Theorems} \label{App: prof}

\subsection{Proof of Theorem 3.1} \label{pfth1}
\begin{proof}
	
	By the definition of $\hat{\boldsymbol{\gamma}}$, we have
\begin{align}\label{the1eq1}
    \left|p(\L^{-1} \boldsymbol{\delta})^{\prime}\hat{\S} \hat{\boldsymbol{\gamma}}-(\L^{-1} \boldsymbol{\delta})^{\prime}(\tilde{\bmu}_1-\tilde{\bmu}_2)\right| \leq& ~\lambda_n \left\| \L^{-1}\boldsymbol{\delta} \right\|_1 + \|\tilde{\boldsymbol{\delta}}-\boldsymbol{\delta} \|_{\infty} \left\| \L^{-1}\boldsymbol{\delta} \right\|_1  \nonumber\\
      \leq & ~2\lambda_n \left\| \L^{-1}\boldsymbol{\delta} \right\|_1. 
\end{align} 

By (\ref{lm1}), we have
\begin{align}\label{the1eq2}
\left|p(\L^{-1} \boldsymbol{\delta})^{\prime}\hat{\S} \hat{\boldsymbol{\gamma}}-\boldsymbol{\delta}^{\prime} \hat{\boldsymbol{\gamma}} \right| \leq \lambda_n \|\hat{\boldsymbol{\gamma}}\|_1+ \|\tilde{\boldsymbol{\delta}}-\boldsymbol{\delta}\|_{\infty}\|\hat{\boldsymbol{\gamma}}\|_1 \leq 2\lambda_n \left\| \L^{-1}\boldsymbol{\delta} \right\|_1,
\end{align}
and together with (\ref{the1eq1}) implies 
\begin{align}\label{the1eq3}
	|(\hat{\boldsymbol{\gamma}}-\L^{-1}\boldsymbol{\delta})^{\prime}  \boldsymbol{\delta} | \leq 4\lambda_n \left\| \L^{-1}\boldsymbol{\delta} \right\|_1.	
\end{align}	

Then we have 
\begin{align}\label{the1eq4}
	|(\tilde{\bmu}-\bmu_1)^{\prime}\hat{\boldsymbol{\gamma}} +\frac{1}{2} \boldsymbol{\delta}^{\prime} \L^{-1} \boldsymbol{\delta}| \leq & ~|(\tilde{\bmu}-\bmu)^{\prime}\hat{\boldsymbol{\gamma}}|+ \frac{1}{2}|\boldsymbol{\delta}^{\prime}\hat{\boldsymbol{\gamma}} - \boldsymbol{\delta}^{\prime} \L^{-1} \boldsymbol{\delta}| \nonumber\\
	\leq& ~|(\tilde{\bmu}-\bmu)^{\prime}\hat{\boldsymbol{\gamma}}| +2\lambda_n \left\| \L^{-1}\boldsymbol{\delta} \right\|_1 \nonumber\\
	\leq&
	~C\sqrt{\frac{\log p}{n}}\left\| \L^{-1}\boldsymbol{\delta} \right\|_1+2\lambda_n \left\| \L^{-1}\boldsymbol{\delta} \right\|_1. 
\end{align} 
Similarly, we have
\begin{align}\label{the1eq5}
	|(\tilde{\bmu}-\bmu_2)^{\prime}\hat{\boldsymbol{\gamma}} -\frac{1}{2} \boldsymbol{\delta}^{\prime} \L^{-1} \boldsymbol{\delta}| 
	\leq
	C\sqrt{\frac{\log p}{n}}\left\| \L^{-1}\boldsymbol{\delta} \right\|_1+2\lambda_n \left\| \L^{-1}\boldsymbol{\delta} \right\|_1. 
\end{align}
Next, considering the denominator in $R_n$. We have
\begin{align*}
    \|\boldsymbol{\Sigma} \hat{\boldsymbol{\gamma}}-\boldsymbol{\delta}\|_{\infty} \leq \|\boldsymbol{\Sigma} \hat{\boldsymbol{\gamma}}-p\hat{\S} \hat{\boldsymbol{\gamma}}\|_{\infty}+2\lambda_n
    \leq C \left\| \L^{-1}\boldsymbol{\delta} \right\|_1 \sqrt{\frac{\log p}{n}} + 2\lambda_n.
\end{align*}
Thus we have
\begin{align*}
	|\hat{\boldsymbol{\gamma}}^{\prime}\boldsymbol{\Sigma} \hat{\boldsymbol{\gamma}}-\hat{\boldsymbol{\gamma}}^{\prime}\boldsymbol{\delta}|
	\leq C \left\| \L^{-1}\boldsymbol{\delta} \right\|_1^2 \sqrt{\frac{\log p}{n}} + 2\lambda_n\left\| \L^{-1}\boldsymbol{\delta} \right\|_1.
\end{align*}
According to (\ref{the1eq3}), we have
\begin{align}\label{the1eq6}
	|\hat{\boldsymbol{\gamma}}^{\prime}\boldsymbol{\Sigma} \hat{\boldsymbol{\gamma}}-\boldsymbol{\delta}^{\prime}\L^{-1}\boldsymbol{\delta}| 
	\leq C \left\| \L^{-1}\boldsymbol{\delta} \right\|_1^2 \sqrt{\frac{\log p}{n}} + 6\lambda_n\left\| \L^{-1}\boldsymbol{\delta} \right\|_1.
\end{align}
Suppose $\boldsymbol{\delta}^{\prime} \L^{-1}\boldsymbol{\delta}\geq M$ for some $M>0$. By (\ref{the1}), (\ref{the1eq4}) and (\ref{the1eq6}), we have
\begin{align*}
\left|	\frac{(\tilde{\bmu}-\bmu_1)^{\prime}\hat{\boldsymbol{\gamma}}}{\sqrt{\hat{\boldsymbol{\gamma}}^{\prime}\boldsymbol{\Sigma} \hat{\boldsymbol{\gamma}}}} \right| \geq C \left|	\frac{\boldsymbol{\delta}^{\prime}\L^{-1}\boldsymbol{\delta}}{\sqrt{\hat{\boldsymbol{\gamma}}^{\prime}\boldsymbol{\Sigma} \hat{\boldsymbol{\gamma}}}} \right| \geq C((\boldsymbol{\delta}^{\prime}\L^{-1}\boldsymbol{\delta})^{-1}+o(1))^{-1/2}\geq CM^{1/2},
\end{align*}
this implies that
\begin{align}\label{the1eq7}
	|R_n-R |\leq \text{exp}(-CM).
\end{align}
Suppose $\boldsymbol{\delta}^{\prime} \L^{-1}\boldsymbol{\delta}\leq M$, by (\ref{the1}), and (\ref{the1eq6}), yields 
\begin{align}\label{the1eq8}
	\left|\frac{\hat{\boldsymbol{\gamma}}^{\prime}\boldsymbol{\Sigma} \hat{\boldsymbol{\gamma}}}{\boldsymbol{\delta}^{\prime}\L^{-1}\boldsymbol{\delta}}-1 \right|=o(1).
\end{align}
And together with (\ref{the1eq4}), we have 
\begin{align}\label{the1eq9}
	\left|\frac{(\tilde{\bmu}-\bmu_1)^{\prime}\hat{\boldsymbol{\gamma}}}{\sqrt{\hat{\boldsymbol{\gamma}}^{\prime}\boldsymbol{\Sigma} \hat{\boldsymbol{\gamma}}}}+\frac{(1/2)\boldsymbol{\delta}^{\prime}\L^{-1}\boldsymbol{\delta}}{\sqrt{\hat{\boldsymbol{\gamma}}^{\prime}\boldsymbol{\Sigma} \hat{\boldsymbol{\gamma}}}}	\right|\leq C\frac{|\L^{-1} \boldsymbol{\delta}|}{(\boldsymbol{\delta}^{\prime}\L^{-1}\boldsymbol{\delta})^{1/2}}\lambda_n.
\end{align}
By (\ref{the1eq6}), we have
\begin{align}\label{the1eq10}
	\left|\frac{1}{\sqrt{\hat{\boldsymbol{\gamma}}^{\prime}\boldsymbol{\Sigma} \hat{\boldsymbol{\gamma}}}}-\frac{1}{\sqrt{\boldsymbol{\delta}^{\prime}\L^{-1}\boldsymbol{\delta}}}\right|
     \leq&  \frac{C\|\L^{-1}\boldsymbol{\delta}\|_1^2\sqrt{\log p/n}+6\|\L^{-1}\boldsymbol{\delta}\|_1\lambda_n}{\sqrt{\hat{\boldsymbol{\gamma}}^{\prime}\boldsymbol{\Sigma} \hat{\boldsymbol{\gamma}}}\sqrt{\boldsymbol{\delta}^{\prime}\L^{-1}\boldsymbol{\delta}}\left(\sqrt{\hat{\boldsymbol{\gamma}}^{\prime}\boldsymbol{\Sigma} \hat{\boldsymbol{\gamma}}}+\sqrt{\boldsymbol{\delta}^{\prime}\L^{-1}\boldsymbol{\delta}} \right)}\nonumber\\
	 \leq& C (\boldsymbol{\delta}^{\prime}\L^{-1}\boldsymbol{\delta})^{-3/2}\left(\|\L^{-1}\boldsymbol{\delta}\|_1^2\sqrt{\frac{\log p}{n}}+ \|\L^{-1}\boldsymbol{\delta}\|_1 \lambda_n \right) .
\end{align}
and
\begin{align}\label{the1eq11}
	\left| \frac{(1/2)\boldsymbol{\delta}^{\prime}\L^{-1}\boldsymbol{\delta}}{\sqrt{\hat{\boldsymbol{\gamma}}^{\prime}\boldsymbol{\Sigma} \hat{\boldsymbol{\gamma}}}} -\frac{1}{2}(\boldsymbol{\delta}^{\prime}\L^{-1}\boldsymbol{\delta})^{1/2}\right|
	\leq 
	C \frac{\|\L^{-1}\boldsymbol{\delta}\|_1^2}{(\boldsymbol{\delta}^{\prime}\L^{-1}\boldsymbol{\delta})^{1/2}}\sqrt{\frac{\log p}{n}}+C\frac{\|\L^{-1}\boldsymbol{\delta}\|_1}{(\boldsymbol{\delta}^{\prime}\L^{-1}\boldsymbol{\delta})^{1/2}}\lambda_n=:r_n.
\end{align}
By condition 1, (\ref{the1eq9}) and (\ref{the1eq11}),
\begin{align}\label{the1eq12}
R_n= R\times (1+O(1)r_n(\boldsymbol{\delta}^{\prime}\L^{-1}\boldsymbol{\delta})^{1/2}\exp(O(1)(\boldsymbol{\delta}^{\prime}\L^{-1}\boldsymbol{\delta})^{1/2}r_n)).
\end{align}
By the assumption $\boldsymbol{\delta}^{\prime} \L^{-1}\boldsymbol{\delta}\leq M$ and the condition (\ref{the1}), we have $(\|\boldsymbol{\Omega} \boldsymbol{\delta}\|_1+\|\boldsymbol{\Omega} \boldsymbol{\delta}\|_1^2)\sqrt{\log p/n}=o(1)$, thus $R_n=(1+o(1))R$, when letting $n,p\rightarrow\infty$ first and then $M\rightarrow\infty$.
The proof is completed.
\end{proof}

\subsection{Proof of Theorem 3.2} \label{pfth2}

\begin{proof}
According to Lemma 6 in \cite{lu2025robust}, we know $\|p\mathbf{S}-\mathbf{\Sigma}\|_\infty=O(p^{-1/2})$. Additionally, by Lemma 7 in \cite{lu2025robust}, we have $\|p\hat{\mathbf{S}}-p\mathbf{S}\|_\infty=O_p(\sqrt{\log p/n})$. Thus, $\|p\hat{\mathbf{S}}-\mathbf{\Sigma}\|_\infty=O_p(\sqrt{\log p/n})$ if $\frac{n}{p\log p }\to 0$.
Since 
\begin{align}\label{the1eq13}
	\|\boldsymbol{\Sigma}(\hat{\boldsymbol{\gamma}}-\L^{-1}\boldsymbol{\delta})\|_{\infty} \leq & \| p\hat{\S}(\hat{\boldsymbol{\gamma}}-\L^{-1}\boldsymbol{\delta})\|_{\infty} + \|(p\hat{\S}-\boldsymbol{\Sigma})(\hat{\boldsymbol{\gamma}}-\L^{-1}\boldsymbol{\delta})\|_{\infty} \nonumber\\
	\leq & 2\lambda_n + C\|\hat{\boldsymbol{\gamma}}-\L^{-1}\boldsymbol{\delta}\|_{1}\sqrt{\frac{\log p}{n}} \nonumber\\
	\leq & 2 \lambda_n +C \|\L^{-1}\boldsymbol{\delta}\|_0 \sqrt{\frac{\log p}{n}}\|\hat{\boldsymbol{\gamma}}-\L^{-1}\boldsymbol{\delta}\|_{\infty}  \nonumber\\
	\leq & 2 \lambda_n+C \Vert \L^{-1}\Vert_{L_1} \|\L^{-1}\boldsymbol{\delta}\|_0 \sqrt{\frac{\log p}{n}}\|\boldsymbol{\Sigma}(\hat{\boldsymbol{\gamma}}-\L^{-1}\boldsymbol{\delta})\|_{\infty},
\end{align}
together with $\Vert \L^{-1}\Vert_{L_1} \|\L^{-1}\boldsymbol{\delta}\|_0 \sqrt{\frac{\log p}{n}}=o(1)$ implies that $\|\boldsymbol{\Sigma}(\hat{\boldsymbol{\gamma}}-\boldsymbol{\gamma})\|_{\infty}\leq C\lambda_n$, we have
\begin{align*}
	|\hat{\boldsymbol{\gamma}}^{\prime}\boldsymbol{\Sigma} \hat{\boldsymbol{\gamma}}-\hat{\boldsymbol{\gamma}}^{\prime}\boldsymbol{\Sigma}\L^{-1}\boldsymbol{\delta}|\leq C\| \L^{-1}\boldsymbol{\delta}\|_1\lambda_n,
\end{align*}
and 
\begin{align*}
|\hat{\boldsymbol{\gamma}}^{\prime}\boldsymbol{\Sigma}\L^{-1}\boldsymbol{\delta}-\boldsymbol{\delta}^{\prime}\L^{-1}\boldsymbol{\delta}|\leq C\| \L^{-1}\boldsymbol{\delta}\|_1\lambda_n.
\end{align*}
The remaining steps refer to the proof of (\ref{the1eq12}),
and the proof is completed.
\end{proof}

\end{document}